\newtheorem{thm}{Theorem}
\newtheorem{lma}{Lemma}
\newtheorem{Def}{Definition}
\DeclareMathOperator{\E}{\mathbb{E}}
\newcommand{\bP}{\overline{P}}
\newcommand{\lb}{\left (}
\newcommand{\rb}{\right )}
\newcommand{\script}[1]{{\mathcal {#1}}}
\newcommand{\Pmax}{P_{\rm max}}
\newcommand{\Pmin}{P_{\rm min}}
\newcommand{\Iinst}{I_{\rm inst}}
\newcommand{\Iavg}{I_{\rm avg}}
\newcommand{\gmax}{g_{\rm max}}
\newcommand{\fgammai}{f_{\gamma_i}}
\newcommand{\fgi}{f_{g_i}}
\newcommand{\EE}[1]{\E \left[ #1 \right]}
\newcommand{\EEU}[1]{\E_{\bfU(k)} \left[ #1 \right]}
\newcommand{\EEY}[1]{\E_{\bfU(k)} \left[ #1 \right]}
\newcommand{\Prob}[1]{{\bf Pr}\left[ #1 \right]}
\newcommand{\bgi}{\overline{g}_i}
\newcommand{\bW}{\overline{W}}
\newcommand{\bfP}{{\bf P}}
\newcommand{\bfPst}{{\bf P}^*}
\newcommand{\bfpi}{{\bm{\pi}}}
\newcommand{\bfpist}{{\bm{\pi}}^*}
\newcommand{\bgamma}{\overline{\gamma}}
\newcommand{\Xvq}{\{X(k)\}_{k=0}^\infty}
\newcommand{\Yivq}{\{Y_i(k)\}_{k=0}^\infty}
\newcommand{\bfY}{{\bf Y}}
\newcommand{\bfU}{{\bf U}}
\newcommand{\bfr}{{\bf r}}
\newcommand{\parFdef}[1]{\triangleq [#1_1(k),\cdots,#1_N(k)]^T}
\newcommand{\DOIC}{\emph{DOIC}}
\newcommand{\DOAC}{\emph{DOAC}}
\newcommand{\DOACopt}{DOAC-Pow-Alloc}
\newcommand{\brho}{\overline{\rho}}
\newcommand{\Ri}{R_i^{(t)}}
\newcommand{\FDurK}{T_k}
\newcommand{\Wup}{W^{\rm up}_{\pi_j}}
\newcommand{\gammaerr}{\gamma_i^{\rm obs}(t)}
\newcommand{\gerr}{g_i^{\rm obs}(t)}
\newcommand{\dHigh}{60}
\newcommand{\dLow}{45}
\newcommand{\Obj}{\Psi}
\newcommand{\psiI}{\psi_{\pi_j}^{\rm I}}
\newcommand{\psiD}{\psi_{\pi_j}^{\rm D}}
\newcommand{\psiInt}{\psi^{\rm I}}
\newcommand{\psiDel}{\psi^{\rm D}}
\newcommand{\tP}{\tilde{P}_{\pi_j}}
\newcommand{\tPsi}{\tilde{\Psi}}
\newcommand{\brhomax}{\brho^{\rm max}}
\newcommand{\sB}{s^{\rm B}}
\newcommand{\sNB}{s^{\rm NB}}
\newcommand{\Rmax}{R_{\rm max}}
\newcommand{\gammamax}{\gamma_{\rm max}}
\newcommand{\Ts}{T}
\newcommand{\Pit}{P_i^{(t)}}
\newcommand{\git}{g_i^{(t)}}
\newcommand{\HOL}{L^{\rm rem}}
\newcommand{\SU}[1]{{\rm SU}_{#1}}
\newcommand{\ProbA}{\eqref{Problem}}
\newcommand{\ProbB}{\eqref{Prob}}
\newcommand{\trho}{\tilde{\rho}}
\newcommand{\tS}{{\bf \tilde{S}}}
\newcommand{\sS}{\script{S}}
\begin{document}
%
\title{Joint Scheduling and Power-Control for Delay Guarantees in Heterogeneous Cognitive Radios}
%
%
%

\author{Ahmed~Ewaisha
        and~Cihan~Tepedelelio\u{g}lu\\
				School of Electrical, Computer, and Energy Engineering, Arizona State University, USA\\
		Email:\{ewaisha, cihan\}@asu.edu
\thanks{The authors are with the School of Electrical, Computer and Energy Engineering, Arizona State University, Tempe, Az, 85287 USA.
}
\thanks{The work in this paper has been partially supported by NSF Grant CCF-1117041.}
\thanks{Parts of this work appeared in the 2015 Asilomar Conference on Signals, Systems, and Computers \cite{Ewaisha_Asilomar_2015}.}
}




\maketitle

\begin{abstract}
An uplink multi secondary user (SU) cognitive radio system having average delay constraints as well as an interference constraint to the primary user (PU) is considered. If the interference channels between the SUs and the PU are statistically heterogeneous due to the different physical locations of the different SUs, the SUs will experience different delay performances. This is because SUs located closer to the PU transmit with lower power levels. Two dynamic scheduling-and-power-allocation policies that can provide the required average delay guarantees to all SUs irrespective of their locations are proposed. The first policy solves the problem when the interference constraint is an instantaneous one, while the second is for problems with long-term average interference constraints. We show that although the average interference problem is an extension to the instantaneous interference one, the solution is totally different. The two policies, derived using the Lyapunov optimization technique, are shown to be asymptotically delay optimal while satisfying the delay and interference constraints. Our findings are supported by extensive system simulations and shown to outperform existing policies as well as shown to be robust to channel estimation errors.
\end{abstract}

\begin{IEEEkeywords}
Dynamic scheduling algorithm; Lyapunov technique; statistical delay constraints; uplink multisecondary user system; Average Interference Constraints; Wireless communication
\end{IEEEkeywords}

\section{Introduction}

The problem of scarcity in the radio spectrum has led to a wide interest in cognitive radio (CR) networks. CRs refer to devices that coexist with the licensed spectrum owners called the primary users (PUs). CRs are capable of dynamically adjusting their transmission parameters according to the environment to avoid harmful interference to the PUs. CR users adjust their transmission power levels, and their rates, according to the interference level the PUs can tolerate. However, this adjustment can be at the expense of quality of service (QoS) provided to the CR users, if not designed carefully.

In real-time applications, such as audio and video conference calls, one of the most effective QoS metrics is the average time a packet spends in the queue before being fully transmitted, quantified by average queuing delay. This is because as this amount of queuing delay increases, the user receiving the packet will have to wait for the packet until it is received. This causes intermittent streaming of the audio and video which is an undesirable feature of these applications. Hence, the average queuing delay needs to be as small as possible to prevent jitter and guarantee acceptable QoS for these applications \cite{shakkottai2002scheduling,kang2013performance}. Queuing delay has gained strong attention recently and scheduling algorithms have been proposed to guarantee small delay in wireless networks (see e.g., \cite{asadi2013survey} for a survey on scheduling algorithms in wireless systems). In \cite{li2011delay}, the authors study joint scheduling-and-power-allocation to minimize the delay in the presence of an average power constraint. A power allocation and routing algorithm is proposed in \cite{neely2003power} to maximize the capacity region under an instantaneous power constraint. In \cite{Scheduling_HSDPA} the authors propose a scheduling algorithm to maximize the cell throughput while maintaining a level of fairness between the users in the cell. In a two-queue setup, one with light traffic and one with light traffic, \cite{Two_Q_Light_Hvy} showed that giving priority to light traffic guarantees the best tail behavior of the delay distribution for both queues under on-off wireless channels.

Unfortunately, applying the existing scheduling algorithms to secondary users (SUs) in CR systems results in undesired delay performance. This is because SUs located physically closer to the PUs might suffer from larger delays because closer SUs transmit with smaller power levels. The SUs should be scheduled and have their power controlled in such a way that prevents harmful interference to the PUs since they share the same spectrum.

The problem of scheduling and/or power control for CR systems has been widely studied in the literature (see e.g., \cite{Letaief_PU_Known_Location,Iter_Bit_Allocation_OFDM,Neely_CNC_2009,6464638,6924778,6810836,NEP_Distributed,Ewaisha_TVT2015}, and the references therein). An uplink CR system is considered in \cite{Letaief_PU_Known_Location} where the authors propose a scheduling algorithm that minimizes the interference to the PU where all users' locations including the PU's are known to the secondary base station. The objective in \cite{6924778} is to maximize the total network's welfare. While this could give good performance in networks with users having statistically homogeneous channels, the users might experience degraded QoS when their channels are heterogeneous. Reference \cite{6810836} has considered users with heterogeneous throughput requirements. This model can be applied best for regular non-real-time applications. While for real time applications, the secondary users might suffer high delays even if their throughput was optimum. In \cite{NEP_Distributed} a distributed scheduling algorithm that uses an on-off rate adaptation scheme is proposed. The authors of \cite{Ewaisha_Throughput_Maximization} and \cite{Ewaisha_TVT2015} propose a closed-form water-filling-like power allocation policy to maximize the CR system's per-user throughput. 
The work in \cite{Neely_CNC_2009} proposes a scheduling algorithm to maximize the capacity region subject to a collision constraint on the PUs. The algorithms proposed in all these works aim at optimizing the throughput for the SUs while protecting the PUs from interference. However, providing guarantees on the queuing delay in CR systems was not the goal of these works. 

The fading nature of the wireless channel requires adapting the user's power and rate according to the channel's fading coefficient. Many existing works on scheduling algorithms consider two-state on-off wireless channels and do not consider multiple fading levels. Among the relevant references that consider a more general fading channel model are \cite{neely2003power} and \cite{Min_Pow_4_Delay_NonCR} which do not include an average interference constraint, as well as \cite{E_Hossain_CR_Delay_Analysis,Fading_No_Scheduling} where the optimization over the scheduling algorithm was not considered.

From a technical point of view, the closest to our work is \cite{li2011delay} which studies the joint scheduling-and-power-allocation problem, and assumes that all users process packets with the same power since it discusses the problem of processing jobs at a CPU. The CPU problem considered in \cite{li2011delay} is a special case of the wireless channel problem herein. Finally, the problem is formulated in continuous time in \cite{li2011delay} where the packet service time follows a continuous time distribution that is easier to analyze than discrete ones. In wireless settings, the fading coherence time provides a naturally discrete/slotted framework which brings with it its own combinatorial technical challenges.\\
Unlike \cite{SU_Selection_Energy_Minimisation} that studies the effect of heterogeneity among SUs on the detection of the PU, in this paper, we study the effect of this heterogeneity on the delay performance of SUs. We consider the joint scheduling and power control problem of minimizing the sum average delay of SUs subject to interference constraints at the PU, for the first time in the literature. Our model relaxes the equal transmission power constraint among SUs. Moreover, our algorithm provides per-user average delay guarantees so that each SU meets its delay requirements. We consider both instantaneous and average interference constraints. The technical challenge of this problem lies in its objective function which is the sum of average delays. This objective is not a simple function in the users' power levels thus making the joint optimization problem at hand challenging. Moreover, the power allocation policy needs to protect the PU from interference. The novel contributions of this paper include: i) proposing two joint-power-control-and-scheduling policies that are optimal with respect to the sum of average delays of SUs, a policy for the problem under instantaneous interference constraint and the other under average interference constraint; ii) exploiting the unique structure of the problem to provide an optimal power allocation algorithm of a lower complexity than exhaustive search; iii) using Lyapunov analysis to show that the policy meets the heterogeneous per-user average delay requirements; iv) proposing an alternative low-complexity suboptimal policy that is shown to have a near-to-optimal performance with polynomial complexity in the number of SUs.



The rest of the paper is organized as follows. The network model and the underlying assumptions are presented in Section \ref{Model}. In Section \ref{Prob_Statement} we formulate the problem mathematically for both the instantaneous as well as the average interference constraints. The proposed policies for both scenarios, their optimality and complexity are presented in Section \ref{Proposed_Algorithm} as well as an alternative suboptimal policy. Section \ref{Results} presents our extensive simulation results. The paper is concluded in Section \ref{Conclusion}.

In this manuscript, we use bold to indicate vectors ${\bf X}$, and calligraphic font to indicate sets $\script{X}$. All logarithms are to the natural base $e$. We use $x^+$ to indicate $\max(x,0)$, $x^*$ to indicate the optimum power of $x$, $\vert \script{X}\vert$ for the cardinality of the set $\script{X}$, $\EE{\cdot}$ to indicate the expected value and $\E_{\bf X}\left[\cdot\right]$ for the expectation conditioned on the random vector ${\bf X}$.

\section{System Model}
\label{Model}
We assume a CR system consisting of a single secondary base station (BS) serving $N$ secondary users (SUs) indexed by the set $\script{N}\triangleq \{1,\cdots N\}$ (Fig. \ref{Cell_Fig}). We are considering the uplink phase where each SU has its own queue buffer for packets that need to be sent to the BS. The SUs share a single frequency channel with a single PU that has licensed access to this channel. The CR system operates in an underlay fashion where the PU is using the channel continuously at all times. SUs are allowed to transmit as long as they do not cause harmful interference to the PU. In this work, we consider two different scenarios where the interference can be considered as harmful. The first is an instantaneous interference constraint where the interference received by the PU at any given slot should not exceed a prespecified threshold $\Iinst$, while the second is an average interference constraint where the interference received by the PU averaged over a large duration of time should not exceed a prespecified threshold $\Iavg$. Moreover, in order for the secondary BS to be able to decode the received signal, no more than one SU at a time slot is to be assigned the channel for transmission.

\begin{figure}%
\centering
\includegraphics[width=0.7\columnwidth]{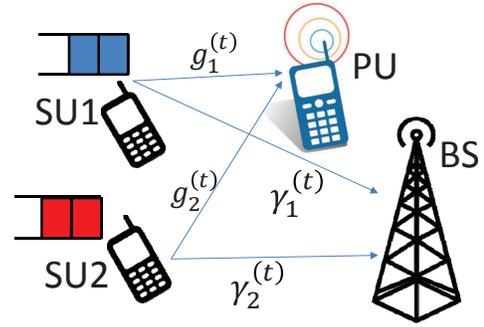}%
\caption{The CR system considered is an uplink one with $N$ SUs (in this figure $N=2$) communicating with their BS. There exists an interference link between each SU and the existing PU. The PU is assumed to be using the channel continuously.}%
\label{Cell_Fig}%
\end{figure}

\subsection{Channel and Interference Model}
We assume a time slotted structure where each slot is of duration $\Ts$ seconds, and equal to the coherence time of the channel. The channel between $\SU{i}$ and the BS is block fading, that is, the instantaneous power gain $\gamma_i^{(t)}$, at time slot $t$, is fixed within the time slot and changes independently in the following time slot. We assume that $\gamma_i^{(t)}$ follows the probability mass function $\fgammai(\gamma)$ with mean $\bgamma_i$ and independent and identically distributed (i.i.d.) across time slots, and $\gammamax$ is the maximum gain that $\gamma_i^{(t)}$ could take. The channel gain is also independent across SUs but not necessary identically distributed allowing heterogeneity among users
. 
SUs use a rate adaptation scheme based on the channel gain $\gamma_i^{(t)}$. The transmission rate of $\SU{i}$ at time slot $t$ is
\begin{equation}
\Ri=\Ts\log \lb 1+P_i^{(t)}\gamma_i^{(t)} \rb \hspace{0.1in} \rm{bits},
\label{Tx_Rate}
\end{equation}
where $P_i^{(t)}$ is the power by which $\SU{i}$ transmits its bits at slot $t$. We assume that there exists a finite maximum rate $\Rmax$ that the SU cannot exceed. This rate is dictated by the maximum power $\Pmax$ and the maximum channel gain $\gammamax$.

The PU experiences interference from the SUs through the channel between each SU and the PU. The interference channel between $\SU{i}$ and the PU, at slot $t$, has a power gain $g_i^{(t)}$ following the probability mass function $\fgi(g)$ with mean $\bgi$, and having $\gmax$ as the maximum value that $\git$ could take. These power gains are assumed to be independent among SUs but not identically distributed. We assume that $\SU{i}$ knows the value of $\gamma_i^{(t)}$ as well as $g_i^{(t)}$, at the beginning of slot $t$ through some channel estimation phase (see \cite[Section VI]{haykin2005cognitive}). Techniques to identify the modulation type can be found in references as \cite{Bari15asilomar2} which discusses the identification of PSK, 16-QAM and FM as well as \cite{Bari15asilomar3} for the continuous time FSK. The channel estimation to acquire $g_i^{(t)}$ can be done by overhearing the pilots transmitted by the primary receiver, when it is acting as a transmitter, to its intended transmitter \cite[Section VI]{haykin2005cognitive}. The channel estimation phase is out of the scope of this work, however the effect of channel estimation errors will be discussed in Section \ref{Results}.

\subsection{Queuing Model}
\subsubsection{Arrival Process} We assume that packets arrive to the $\SU{i}$'s buffer at the beginning of each slot. The number of packets arriving to $\SU{i}$'s buffer follows a Bernoulli process with a fixed parameter $\lambda_i$ packets per time slot. Following the literature, packets are buffered in infinite-sized buffers \cite[pp. 163]{Bertsekas_Data_Networks} and are served according to the first-come-first-serve discipline. Each packet has a fixed length of $L$ bits that is constant for all users. We note that the analysis of the random $L$ case \cite{Bertsekas_Data_Networks} would not be significantly different than the deterministic case, thus we discuss the fixed case for a better presentation of the paper. In this paper, we study the case where $L\gg\Rmax$ which is a typical case for packets with large sizes as video packets \cite{semiconductor2008long}. Due to the randomness in the channels, each packet takes a random number of time slots to be transmitted to the BS. This depends on the rate of transmission $\Ri$ as will be explained next.

\subsubsection{Service Process} When $\SU{i}$ is scheduled for transmission at slot $t$, it transmits $M_i^{(t)}$ bits of the head-of-line (HOL) packet of its queue. The remaining bits of this HOL packet remain in the HOL of $\SU{i}$'s queue until it is reassigned the channel in subsequent time slots. The values $M_i^{(t)}$ and $\HOL_i(t)$ are given by
\begin{align}
M_i^{(t)}&\triangleq\min \lb \Ri,\HOL_i(t) \rb \hspace{0.1in} \rm{bits, and}
\label{Num_Bits}\\
\HOL_i(t+1)&\triangleq \HOL_i(t) - M_i^{(t)},
\label{HOL}
\end{align}
respectively, where $\HOL_i(t)$ is the remaining number of bits of the HOL packet at $\SU{i}$ at the beginning of slot $t$. $\HOL_i(t)$ is initialized by $L$ whenever a packet joins the HOL position of $\SU{i}$'s queue so that it always satisfies $0\leq \HOL_i(t)\leq L$, $\forall t$. A packet is not considered transmitted unless all its $L$ bits are transmitted, i.e. unless $\HOL_i(t)$ becomes zero, at which point $\SU{i}$'s queue decreases by 1 packet. At the beginning of slot $t+1$ the following packet in the buffer, if any, becomes $\SU{i}$'s HOL packet and $\HOL_i(t+1)$ is reset back to $L$ bits. The $\SU{i}$'s queue evolves as follows
\begin{equation}
Q_i^{(t+1)}= \lb Q_i^{(t)} + \vert \script{A}_i^{(t)}\vert - S_i^{(t)} \rb^+,
\label{Queue}
\end{equation}
where $\script{A}_i^{(t)}$ is the set carrying the index of the packet, if any, arriving to $\SU{i}$ at slot $t$, thus $\vert \script{A}_i^{(t)}\vert$ is either $0$ or $1$ since at most one packet per slot can arrive to $\SU{i}$; 
the packet service indicator $S_i^{(t)}=1$ if $\HOL_i(t)$ becomes zero at slot $t$.

The service time $s_i$ of $\SU{i}$ is the number of time slots required to transmit one packet for $\SU{i}$, excluding the service interruptions. Using the assumption $L\gg \Rmax$ to approximate \eqref{Num_Bits} with $M_i^{(t)}=\Ri$, it can be shown that the average service time $\EE{s_i}=L/\EE{\Ri}$ time slots per packet where the expectation is taken over the channel gain $\gamma_i^{(t)}$ as well as over the power $P_i^{(t)}$ when it is channel dependent and random. One example of a random power policy is the \emph{channel inversion} policy as will be discussed later (see \eqref{Power_Allocation}). The service time is assumed to follow a general distribution throughout the paper that depends on the distribution of $\Pit\gamma_i^{(t)}$.

We define the delay $W_i^{(j)}$ of a packet $j$ as the total amount of time, in time slots, packet $j$ spends in $\SU{i}$'s buffer from the slot it joined the queue until the slot when its last bit is transmitted. 
The time-average delay experienced by $\SU{i}$'s packets is given by \cite{li2011delay}

\subsection{Transmission Process}
At the beginning of each time slot $t$, the BS schedules a SU and broadcasts its index $i^*$ and its power $P_{i^*}^{(t)}$ to all SUs on a common control channel. $\SU{i^*}$, in turn, begins transmission of $M_{i^*}^{(t)}$ bits of its HOL packet with a constant power $P_{i^*}^{(t)}$. We assume the BS receives these bits error-free by the end of slot $t$ then a new time slot $t+1$ starts. In this paper, our main goal is the selection of the $\SU{i^*}$ which is a scheduling problem, as well as the choice of the power $P_{i^*}^{(t)}$ which is power allocation. We now elaborate further on this problem.

\section{Problem Statement}
\label{Prob_Statement}
Each $\SU{i}$ has an average delay constraint $\bW_i \leq d_i$ that needs to be satisfied. Moreover, there are two types of interference constraints that the SU needs to meet in order to coexist with the PU. Before discussing both types and stating the problem associated with each one, we first give some definitions.

\subsection{Frame-Based Policy}
\label{Frame_Based_Policy}
In this work, we are interested in frame-based scheduling policies. The idea of dividing time into frames and assigning fixed scheduling and power allocation policy for each frame was also used in \cite{li2011delay}. We divide time into frames where frame $k$ consists of a random number $T_k$ time slots and update the power allocation and scheduling at the beginning of each frame. Where each frame begins and ends is specified by idle periods and will be precisely defined later in this section. During frame $k$, SUs are scheduled according to some priority list $\bfpi(k)$ and each SU is assigned some power to be used when it is assigned the channel. The priority list and the power functions are fixed during the entire frame $k$.

Define $\bfpi(k) \parFdef{\pi}$ where $\pi_j(k)$ is the index of the SU who is given the $j$th priority during frame $k$. Given $\bfpi(k)$, the scheduler becomes a priority scheduler with preemptive-resume priority queuing discipline \cite[pp. 205]{Bertsekas_Data_Networks}.

Frame $k$ consists of $T_k\triangleq\vert \script{F}(k)\vert$ consecutive time-slots, where $\script{F}(k)$ is the set containing the indices of the time slots belonging to frame $k$ (see Fig. \ref{Frame_Structure}). Each frame consists of exactly one \emph{idle period} followed by exactly one \emph{busy period}, both are defined next.

\begin{figure}
\centering
\includegraphics[width=0.8\columnwidth]{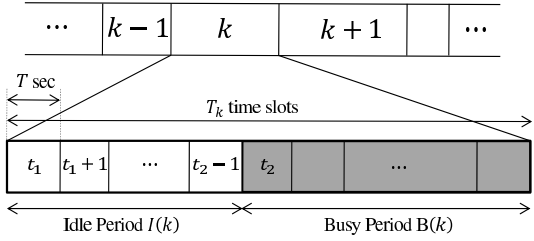}%
\caption{Time is divided into frames. Frame $k$ has $\FDurK\triangleq\vert \script{F}(k)\vert$ slots, each is of duration $\Ts$ seconds. Different frames can have different number of time slots.}%
\label{Frame_Structure}%
\end{figure}

\begin{Def}
\label{Idle_Def}
An idle period is the time interval formed by the consecutive time slots where all SUs have empty buffers. An idle period starts with the time slot $t_1$ following the completion of transmission of the last packet in the system, and ends with a time slot $t_2$ when one or more of the SUs' buffer receives one a new packet to be transmitted (see Fig. \ref{Frame_Structure}). In other words, $t_1$ satisfies $\sum_{i\in\script{N}}Q_i^{(t_1)}=0$ and $\sum_{i\in\script{N}}Q_i^{(t_1-1)}\neq 0$, while $t_2$ satisfies $\sum_{t=t_1}^{t_2-1}\sum_{i\in\script{N}}Q_i^{(t)}=0$ and $\sum_{i\in\script{N}}Q_i^{(t_2)}\neq 0$.
\end{Def}

\begin{Def}
\label{Busy_Def}
Busy period is the time interval between two consecutive idle periods.
\end{Def}
The duration of the idle period $I(k)$ and busy period $B(k)$ of frame $k$ are random variables, thus $T_k=I(k)+B(k)$ is random as well. Since frames do not overlap, if $t\in\script{F}(k_1)$ then $t \notin \script{F}(k_2)$ as long as $k_1\neq k_2$. We can write an equation for the average delay as
\begin{equation}
\bW_i \triangleq \lim_{K \rightarrow \infty} \frac{\EE{\sum_{k=0}^K \lb\sum_{j\in \script{A}_i(k)}W_i^{(j)}\rb}}{\EE{\sum_{k=0}^{K}{\vert\script{A}_i(k)\vert}}}
\label{Delay_Frame}
\end{equation}
where $\script{A}_i(k)\triangleq\cup_{t\in\script{F}(k)}\script{A}_i^{(t)}$ is the set of all packets that arrive at $\SU{i}$'s buffer during frame $k$. We note that the long-term average delay $\bW_i$ in \eqref{Delay_Frame} depends on the chosen priority lists as well as the power allocation policy, in all frames $k\geq 0$.

\subsection{Problem Statement}
\label{Prob_Statement_Subsection}
We are interested to find the optimum scheduling-and-power-allocation policy that minimizes the sum of SUs' average delays subject to per-SU delay constraint as well as some interference constraints. In this paper, we consider two kinds of interference constraints: 1) instantaneous interference constraint; 2) average interference constraint. Since time-slot-based policies that update the scheduling and power-allocation each time-slot suffer from curse of dimensionality \cite{li2011delay}, we restrict our problem to frame-based scheduling policies as well as frame-based power allocation policies. The former is represented by the priority list $\bfpi(k)$ discussed earlier. On the other hand, the latter is defined in the following definition.

\begin{Def}
\label{Frame_Based_Pow_All}
A power allocation policy is said to be a \emph{frame-based power allocation policy} if, at each time slot $t\in\script{F}(k)$ the scheduled user transmits with power $\Pit$ on the form
\begin{equation}
	\Pit=\min\lb\frac{\Iinst}{g_i^{(t)}},P_i(k)\rb,
	\label{Pow_Allocation}
\end{equation}
where $P_i(k)$ is some constant that is fixed $\forall t\in\script{F}(k)$. We refer to $P_i(k)$ as the power parameter of $\SU{i}$.
\end{Def}
\noindent In future sections, we will show that restricting the power allocation policy to the frame-based power allocation policy does not result in loss of optimality.

Consider the following constraints
\begin{align}
& \bW_i \leq d_i & , & \forall i \in \script{N}
\label{Avg_Delay_C}\\
& \Pmin\leq P_i^{(t)} \leq \Pmax & , & \forall i \in \script{N} \rm{\; and \;}\forall t\geq 1,
\label{Max_Pow_C}\\
& \sum_{i=1}^N {P_i^{(t)}g_i^{(t)}} \leq \Iinst &, &\forall t\geq 1,
\label{Inst_Interf_C}\\
& \sum_{i=1}^N{\mathds{1} \lb P_i^{(t)}\rb} \leq 1 & , & \forall t \geq 1,
\label{Single_Tx_C}\\
& I\triangleq \lim_{T\rightarrow \infty} \sum_{i=1}^N\frac{1}{T}\sum_{t=1}^T P_i^{(t)}g_i^{(t)}\leq \Iavg,&&
\label{Avg_Interf_C}
\end{align}
where $I$ denotes the long-term average interference received by the PU while $\mathds{1}(x)\triangleq 1$ if $x\neq 0$ and $0$ otherwise. Constraint \eqref{Avg_Delay_C} is the average delay constraint for $\SU{i}$, \eqref{Max_Pow_C} is the maximum power constraint due to the limitations of $\SU{i}$'s transmitter as well as the minimum power constraint that results in finite delays for all SUs ($\Pmin$ is some constant that will be defined later), \eqref{Inst_Interf_C} is the instantaneous interference constraint for the PU, \eqref{Single_Tx_C} indicates that no more than a single SU is to be transmitting at slot $t$, while the last constraint \eqref{Avg_Interf_C} is to protect the PU from average interference. The two optimization problems that we solve in this paper are
\begin{align}
\underset{\{\bfpi(k)\},\{\bfP{}(k)\}}{\rm{minimize}}& \sum_{i=1}^N \bW_i &
\label{Problem}\\
\nonumber\rm{subject \; to\; } & \text{constraints \eqref{Avg_Delay_C}, \eqref{Max_Pow_C}, \eqref{Inst_Interf_C} and \eqref{Single_Tx_C}} &
\end{align}
and
\begin{align}
\underset{\{\bfpi(k)\},\{\bfP{}(k)\}}{\rm{minimize}}& \sum_{i=1}^N \bW_i
\label{Prob}\\
\nonumber\rm{subject \; to\; } & \text{constraints \eqref{Avg_Delay_C}, \eqref{Max_Pow_C}, \eqref{Inst_Interf_C}, \eqref{Single_Tx_C} and \eqref{Avg_Interf_C}}.
\end{align}
We refer to problem {\ProbA} as the \emph{instantaneous interference constraint problem}, while to {\ProbB} as the \emph{average interference constraint problem}. In the next section we solve these two problems and show that their solutions are different.

\section{Proposed Power Allocation and Scheduling Algorithm}
\label{Proposed_Algorithm}
We solve problems {\ProbA} and {\ProbB} by proposing online joint scheduling and power allocation policies that dynamically update the scheduling and the power allocation. We show that these policies have performances that come arbitrarily close to being optimal. That is, we can achieve a sum of the average delays arbitrarily close to its optimal value depending on some control parameter $V$.

We first discuss the idea behind our policies. Then we present the proposed policy for each problem, {\ProbA} and {\ProbB}, separately.

\subsection{Satisfying Delay Constraints}
In order to guarantee a feasible solution satisfying the delay constraints in problems {\ProbA} and {\ProbB}, we set up a ``virtual queue'' associated with each delay constraint $\bW_i\leq d_i$. The virtual queue will be used in both problems {\ProbA} and {\ProbB}. The virtual queue for $\SU{i}$ at frame $k$ is given by
\begin{equation}
Y_i(k+1)\triangleq\lb Y_i(k)+\sum_{j\in \script{A}_i(k)}{\lb W_i^{(j)}-r_i(k)\rb} \rb^+
\label{Delay_Q}
\end{equation}
where $r_i(k)\in[0,d_i]$ is an auxiliary random variable, that is to be optimized over and $Y_i(0)\triangleq 0$, $\forall i$. We define $\bfY(k) \parFdef{Y}$. Equation \eqref{Delay_Q} is calculated at the end of frame $k-1$ and represents the amount of delay exceeding the delay bound $d_i$ for $\SU{i}$ up to the beginning of frame $k$. We use the definition of \emph{mean rate stability} as in \cite{li2011delay} to state the following lemma.

\begin{lma}
\label{Mean_Rate_Lemma}
If $\{Y_i(k)\}_{k=0}^\infty$ is mean rate stable, then the time-average delay of $\SU{i}$ satisfies $\bW_i \leq d_i$.
\end{lma}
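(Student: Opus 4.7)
The plan is to start from the virtual-queue recursion \eqref{Delay_Q}, strip off the $(\cdot)^+$ to obtain an inequality that telescopes cleanly, and then pass to the limit using the fact that mean rate stability kills $\E[Y_i(K)]/K$. Specifically, since $(x)^+\geq x$, the recursion gives
\begin{equation}
Y_i(k+1)\;\geq\; Y_i(k)+\sum_{j\in\script{A}_i(k)}W_i^{(j)} \;-\; \vert \script{A}_i(k)\vert\, r_i(k).
\nonumber
\end{equation}
Summing this inequality over $k=0,\ldots,K-1$ and using $Y_i(0)=0$ telescopes the left-hand side, leaving
\begin{equation}
\sum_{k=0}^{K-1}\sum_{j\in\script{A}_i(k)} W_i^{(j)} \;\leq\; Y_i(K) + \sum_{k=0}^{K-1}\vert\script{A}_i(k)\vert\, r_i(k) \;\leq\; Y_i(K) + d_i \sum_{k=0}^{K-1}\vert\script{A}_i(k)\vert,
\nonumber
\end{equation}
where the last inequality invokes the a priori bound $r_i(k)\in[0,d_i]$ built into the auxiliary variable.

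Next I would take expectations on both sides and divide by $\E\bigl[\sum_{k=0}^{K-1}\vert\script{A}_i(k)\vert\bigr]$, which is precisely the denominator appearing in the frame-based definition \eqref{Delay_Frame} of $\bW_i$. This yields
\begin{equation}
\frac{\E\!\left[\sum_{k=0}^{K-1}\sum_{j\in\script{A}_i(k)} W_i^{(j)}\right]}{\E\!\left[\sum_{k=0}^{K-1}\vert\script{A}_i(k)\vert\right]}
\;\leq\; \frac{\E[Y_i(K)]}{\E\!\left[\sum_{k=0}^{K-1}\vert\script{A}_i(k)\vert\right]} + d_i.
\nonumber
\end{equation}
The left-hand side converges to $\bW_i$ by \eqref{Delay_Frame}, so it suffices to show that the first term on the right vanishes as $K\to\infty$.

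For that term I would argue as follows. The denominator counts packet arrivals over $\sum_{k=0}^{K-1}T_k\geq K$ slots (each frame contains at least one slot, namely its idle slot), and arrivals at $\SU{i}$ form a Bernoulli process with rate $\lambda_i>0$, so $\E\bigl[\sum_{k=0}^{K-1}\vert\script{A}_i(k)\vert\bigr]\geq \lambda_i K$. Hence
\begin{equation}
0 \;\leq\; \frac{\E[Y_i(K)]}{\E\!\left[\sum_{k=0}^{K-1}\vert\script{A}_i(k)\vert\right]} \;\leq\; \frac{1}{\lambda_i}\cdot\frac{\E[Y_i(K)]}{K},
\nonumber
\end{equation}
and the rightmost quantity tends to zero by the definition of mean rate stability. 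Taking $K\to\infty$ in the previous display therefore yields $\bW_i\leq d_i$, which is the desired conclusion.

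The only subtle point, and the one I would spend the most care on, is the lower bound on the denominator: I must rule out the possibility that the expected number of arrivals across $K$ frames grows sublinearly, which could in principle happen if frames could vanish or if arrivals were allowed to be zero in rate. Both are excluded here: each frame contains at least the slot on which it starts, and $\lambda_i>0$ is a standing assumption on the arrival process, so the chain of inequalities above goes through without further work.
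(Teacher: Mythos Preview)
Your proof is correct and follows essentially the same route as the paper: telescope the virtual-queue recursion via $(x)^+\geq x$, divide through by the expected arrival count, bound $r_i(k)\leq d_i$, and use mean rate stability to kill the residual $\E[Y_i(K)]/K$ term. In fact, you spell out the linear lower bound $\E\bigl[\sum_{k<K}\vert\script{A}_i(k)\vert\bigr]\geq \lambda_i K$ that the paper's proof needs but only implicitly assumes when it asserts that the factor $K/\E\bigl[\sum_{k<K}\vert\script{A}_i(k)\vert\bigr]$ stays bounded.
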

\begin{proof}
Following similar steps as in Lemma 3 in \cite{li2011delay}, we can show that
\begin{multline}
\frac{\EE{\sum_{k=0}^{K-1} \lb\sum_{j\in \script{A}_i(k)}W_i^{(j)}\rb}}{\EE{\sum_{k=0}^{K-1}{\vert\script{A}_i(k)\vert}}} \leq \\
\frac{\EE{Y_i(K)}}{K}\frac{K}{\EE{\sum_{k=0}^{K-1}{\vert\script{A}_i(k)\vert}}} + \frac{\sum_{k=0}^{K-1}{\EE{\vert\script{A}_i(k)\vert {r_i(k)}}}}{\sum_{k=0}^{K-1}\EE{\vert\script{A}_i(k)\vert}}.
\label{Wait_r_i}
\end{multline}
Replacing $r_i(k)$ by its upper bound $d_i$, taking the limit as $K\rightarrow\infty$ then using the mean rate stability definition and \eqref{Delay_Frame} completes the proof.
\end{proof}
Lemma \ref{Mean_Rate_Lemma} provides a condition on the virtual queue $\Yivq$ so that $\SU{i}$'s average delay constraint $\bW_i\leq d_i$ in \eqref{Avg_Delay_C} is satisfied. That is, if the proposed joint power allocation and scheduling policy results in a mean rate stable $\Yivq$, then $\bW_i\leq d_i$. For both problems, the proposed policy depends on the Lyapunov optimization where the goal is to choose the joint scheduling and power allocation policy that minimizes the drift-plus-penalty. In Section \ref{Algorithm_DOIC} (Section \ref{Algorithm_DOAC}) we will show that if problem {\ProbA} (problem {\ProbB}) is feasible, then the proposed policy guarantees mean rate stability for the queues $\Yivq$.

\subsection{Algorithm for Instantaneous Interference Constraint Problem}
\label{Algorithm_DOIC}
We now propose the \emph{Delay Optimal with Instantaneous Interference Constraint} (\emph{DOIC}) policy that solves problem {\ProbA}. This policy is executed at the beginning of each frame $k$ for finding $\bfP^{(t)}$ as well as the optimum list $\bfpi(k)$, given some prespecified control parameter $V$. Define the random variable $R_i(P)$ as (not to be confused with $R_i^{(t)}$ in \eqref{Tx_Rate})
\begin{equation}
R_i(P)\triangleq \Ts\log\lb 1+\min\lb \frac{\Iinst}{g_i^{(t)}},P\rb\gamma_i^{(t)}\rb,
\label{Rate_Explicit}
\end{equation}
where $P$ is some fixed constant argument and define $\mu_i(P)\triangleq\EE{R_i(P)}/L$ where the expectation is taken over $\git$ and $\gamma_i^{(t)}$. We now present the {\DOIC} policy, its optimality and then the intuition behind it.\\
\noindent{\bf {\DOIC} Policy} (executed at the beginning of frame $k$):
\begin{enumerate}
	\item The BS sorts the SUs according to the descending order of $Y_i(k)\mu_i(\Pmax)$. The sorted list is denoted by $\bfpi(k)$.
	\item At the beginning of each slot $t\in\script{F}(k)$ the BS schedules $\SU{i^*}$ that has the highest priority in the list $\bfpi(k)$ among those having non-empty buffers.
	\item $\SU{i^*}$, in turn, transmits $M_{i^*}^{(t)}$ packets as dictated by \eqref{Num_Bits} where $P_i^{(t)}=0$ $\forall i\neq i^*$ while $P_{i^*}^{(t)}$ is calculated as
\begin{equation}
	P_{i^*}^{(t)}=\min\lb\frac{\Iinst}{g_{i^*}^{(t)}},\Pmax\rb,
\label{Power_Allocation}
\end{equation}
	\item At the end of frame $k$, for all $i\in \script{N}$ the BS updates:
	\begin{enumerate}
		\item $r_i(k)= d_i$ if $V<Y_i(k)\lambda_i$, and $r_i(k)=0$ otherwise, and then
		\item $Y_i(k+1)$ via \eqref{Delay_Q}.
	\end{enumerate}
\end{enumerate}
Before we discuss the optimality of the {\DOIC} in Theorem \ref{Optimality}, we define the following quantities. Let $a\triangleq 1-\Pi_{i=1}^N \lb 1-\lambda_i \rb$ denote the probability of receiving a packet from a user or more at a given time slot, while $C_Y\triangleq\sum_{i=1}^N C_{Y_i}$ with $C_{Y_i}\triangleq \sqrt{\EE{A^4}\EE{B^4}} + d_i^2\EE{A^2}$, where $\EE{A^2}$ and $\EE{A^4}$ are bounds on the second and fourth moments of the total number of arrivals $\sum_i\vert \script{A}_i(k)\vert$ during frame $k$, respectively, while $\EE{B^4}$ is a bound on the fourth moment of the busy period $B(k)$. The finiteness of these moments can be shown to hold if the first four moments of the service time are finite. In Appendix \ref{No_Deep_Fade} we show that all the service time moments exist given any distribution for $\Pit\gamma_i^{(t)}$. We omit the derivation of these bounds due to lack of space.

\begin{thm}
\label{Optimality}
If problem {\ProbA} is strictly feasible, then the proposed {\DOIC} policy results in a time average of the SUs' delays satisfying the following inequality
\begin{equation}
\sum_{i=1}^N{\bW_i} \leq \frac{aC_Y}{V} + \sum_{i=1}^N{\bW_i^*},
\label{Optimality_Equation}
\end{equation}
where $\bW_i^*$ is the optimum value of the delay when solving problem {\ProbA}, while $a$ and $C_Y$ are as given above. Moreover, the virtual queues $\Yivq$ are mean rate stable $\forall i \in \script{N}$.
\end{thm}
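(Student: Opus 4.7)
The plan is to apply the Lyapunov drift-plus-penalty method of Neely, adapted to the frame-based formulation as in \cite{li2011delay}. Define the Lyapunov function $L(\bfY(k)) \triangleq \frac{1}{2}\sum_{i=1}^N Y_i^2(k)$ and the one-frame conditional drift $\Delta(\bfY(k)) \triangleq \E[L(\bfY(k+1)) - L(\bfY(k)) \mid \bfY(k)]$. The goal is to add the per-frame penalty $V\,\E[\sum_i \sum_{j \in \script{A}_i(k)} W_i^{(j)} \mid \bfY(k)]$ to $\Delta$, upper-bound the sum by an expression whose minimizer coincides with the \DOIC{} policy, and then telescope to obtain both the delay bound \eqref{Optimality_Equation} and the mean rate stability of $\bfY$.

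First I would square the recursion \eqref{Delay_Q} and take conditional expectation to obtain
\begin{equation*}
\Delta(\bfY(k)) \le C_Y + \sum_{i=1}^N Y_i(k)\,\E\Big[\sum_{j \in \script{A}_i(k)} W_i^{(j)} - \lvert\script{A}_i(k)\rvert\, r_i(k) \,\Big|\, \bfY(k)\Big],
\end{equation*}
where the constant $C_Y$ is produced by bounding the second-order residual $\E[(\sum_j(W_i^{(j)} - r_i(k)))^2]$ using $W_i^{(j)}\le B(k)$, $r_i(k)\in[0,d_i]$, Cauchy--Schwarz, and the moment bounds $\E[A^2],\E[A^4],\E[B^4]$ stated before the theorem. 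Adding the penalty and using $\E[\lvert\script{A}_i(k)\rvert r_i(k) \mid \bfY(k)] = \lambda_i r_i(k)\,\E[T_k\mid\bfY(k)]$ separates the right-hand side into two decoupled minimization problems: one in the auxiliary variables $\{r_i(k)\}$ and one in the joint scheduling-and-power decisions $(\bfpi(k),\bfP(k))$.

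The next step is to check that \DOIC{} minimizes each piece. For $r_i(k)\in[0,d_i]$ the objective is linear with coefficient proportional to $V\lambda_i - Y_i(k)\lambda_i$, which recovers exactly the threshold rule $r_i(k)=d_i$ if $V<Y_i(k)\lambda_i$ and $r_i(k)=0$ otherwise. For the scheduling-and-power piece, note that $R_i(P)$ in \eqref{Rate_Explicit} already embeds the instantaneous interference cap, so the power rule $P_{i^*}^{(t)}=\min(\Iinst/g_{i^*}^{(t)},\Pmax)$ maximizes the per-slot rate subject to \eqref{Max_Pow_C}--\eqref{Inst_Interf_C} and hence maximizes $\mu_i$. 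Given those service rates, the weighted delay $\sum_i (Y_i(k)+V)\lambda_i\cdot(\text{per-user mean waiting})$ under preemptive-resume priority is minimized by sorting SUs in decreasing order of $Y_i(k)\mu_i(\Pmax)$ via a $c\mu$-type adjacent-swap interchange argument.

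Finally I would benchmark against a stationary randomized policy: strict feasibility of \ProbA{} implies existence of a randomized policy meeting \eqref{Avg_Delay_C}--\eqref{Single_Tx_C} and achieving $\sum_i \bW_i$ arbitrarily close to $\sum_i\bW_i^*$. Plugging it into the upper bound yields $\Delta+V\,\E[\cdot]\le aC_Y + V\,\E[T_k]\sum_i\lambda_i\bW_i^*$, and because \DOIC{} minimizes the same upper bound, the inequality transfers to \DOIC. Telescoping over $k=0,\ldots,K-1$, using $L(\bfY(0))=0$ and $L(\bfY(K))\ge 0$, dividing through by the expected cumulative arrivals and invoking \eqref{Delay_Frame} delivers \eqref{Optimality_Equation}; dropping the penalty term instead, combined with Jensen's inequality on $\E[L(\bfY(K))]\le K(aC_Y+V\cdot\text{const})$, gives $\E[Y_i(K)]/K\to 0$, i.e., mean rate stability. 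The main obstacle I anticipate is the optimality of the priority rule, which requires expressing $\E[\sum_j W_i^{(j)}]$ for preemptive-resume priority queues with heterogeneous fading-induced service-time distributions purely through $\mu_i(\Pmax)$ and then executing the interchange argument cleanly; a secondary nuisance is the random frame length $T_k$, whose expectation couples the separable subproblems and is the source of the factor $a$ in the bound.
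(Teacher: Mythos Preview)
Your overall plan---frame-based Lyapunov drift-plus-penalty, show that {\DOIC} minimizes the bound, compare to the optimal policy, telescope, and use Jensen for mean rate stability---matches the paper's proof. However, the penalty term you add is wrong, and the error propagates through the remainder of the sketch.

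You propose the penalty $V\,\E\bigl[\sum_i\sum_{j\in\script{A}_i(k)} W_i^{(j)}\mid\bfY(k)\bigr]$. With this choice the auxiliary variable $r_i(k)$ appears only in the drift, with coefficient $-Y_i(k)\lambda_i\,\E[T_k\mid\bfY(k)]\le 0$, so the minimizer over $r_i\in[0,d_i]$ is always $r_i=d_i$---not the threshold rule in Step~4a of {\DOIC}. The coefficient ``$V\lambda_i-Y_i(k)\lambda_i$'' you write cannot arise from a penalty on $W$; a $V$ in front of $r_i$ can appear only if the penalty itself depends on $r_i$. The paper instead adds the auxiliary-variable penalty $V\sum_i\E[r_i(k)T_k\mid\bfY(k)]$, which produces the coefficient $(V-Y_i(k)\lambda_i)$ on $r_i$ and recovers Step~4a exactly. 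The same correction removes the extra $+V$ from your scheduling weight: the delay term in the bound becomes $\sum_j Y_{\pi_j}(k)\lambda_{\pi_j}\EEY{W_{\pi_j}^{(j)}}$ (this is the $\Phi_{\rm I}$ of \eqref{Upper_Bound_Inst}), not $\sum_j(Y_{\pi_j}(k)+V)\lambda_{\pi_j}\EEY{W_{\pi_j}^{(j)}}$, so the $c\mu$ rule matches Step~1 of {\DOIC}.

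A second consequence: with the penalty on $r_i$, the telescoped inequality bounds $\sum_k\E[r_i(k)T_k]/\sum_k\E[T_k]$, not $\bW_i$ directly. The paper closes the loop via \eqref{Wait_r_i} (from Lemma~\ref{Mean_Rate_Lemma}), which converts the time average of $r_i$ into a bound on $\bW_i$ once mean rate stability of $Y_i$ is in hand; your final paragraph skips this conversion. Lastly, the factor $a$ does not appear in the per-frame bound as you wrote; it enters only after summing over $k$ and dividing by $\sum_k\E[T_k]$, through $\E[T_k]\ge\E[I(k)]=1/a$.
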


\begin{proof}
See Appendix \ref{Optimality_Proof_Inst}.
\end{proof}

Theorem \ref{Optimality} says that the objective function of problem {\ProbA} is upper bounded by the optimum value $\sum_i\bW_i^*$ plus some constant gap that vanishes as $V\rightarrow\infty$. Having a vanishing gap means that the {\DOIC} policy is asymptotically optimal. Moreover, based on the mean rate stability of the queues $\Yivq$, the set of delay constraints of problem {\ProbA} is satisfied.

The intuition behind the {\DOIC} policy comes from the proof of Theorem \ref{Optimality}. In the proof, we follow the Lyapunov optimization technique to obtain an expression for the drift-plus-penalty then upper bound this expression (see \eqref{Drift_Plus_Penalty1}). The {\DOIC} policy becomes the one that minimizes this upper bound or, simply, minimizing $\Phi$ which is given by
\begin{equation}
\Phi_{\rm I}\triangleq \sum_{i=1}^N \lb V -Y_i(k) \lambda_i\rb r_i(k)+ \sum_{j=1}^N Y_{\pi_j}(k) \lambda_{\pi_j} \EEY{W_{\pi_j}^{(j)}}.
\label{Upper_Bound_Inst}
\end{equation}
Minimizing the first summation in $\Phi_{\rm I}$ minimizes objective function in {\ProbA}, while minimizing the second summation guarantees that the solution is feasible. We observe that the first term in \eqref{Upper_Bound_Inst} can be minimized independent of the second term. Step 4.a in the {\DOIC} policy minimizes the first term in \eqref{Upper_Bound_Inst} while, using the $c\mu$ rule \cite{c_mu_Rule}, the second term is minimized in Step 1.

In the {\DOIC} policy, the drawback of setting $V$ very large is that the time needed for the algorithm to converge increases. This increase is linear in $V$ \cite{NeelyPhD}. That is, if the number of frames required for the quantity $\sum_i{Y_i(k)}/(Nk)$ to be less than $\epsilon$ (for some $\epsilon>0$) is $O(K_1)$, then increasing $V$ to $\beta V$ will require $O(\beta K_1)$ frames for it to be less than $\epsilon$, for any $\beta>1$. We note that the complexity of the {\DOIC} policy is $O(N)$ because calculating $\mu_i(\Pmax)$ is of $O(1)$, while the power is closed-form in \eqref{Power_Allocation}. We note that if problem {\ProbA} is not feasible, then this is because one of two reasons; either one or more of the constraints is stringent, or otherwise because $\sum_{i=1}^N\lambda_i/\mu_i(\Pmax)\geq1$. If it is the former, then the {\DOIC} policy will result in a point that is as close as possible to the feasible region. On the other hand, if it is the latter, then we could add an admission controller that limits the average number of packets arriving at buffer $i$ to $\lambda_i(1-\epsilon)/\lb \sum_{i=1}^N\lambda_i/\mu_i(\Pmax)\rb$ for some $\epsilon>0$.

\subsection{Algorithm for Average Interference Constraint Problem}
\label{Algorithm_DOAC}
We now propose the Delay-Optimal-with-Average-Interference-Constraint {\DOAC} policy for problem {\ProbB}. We first give the following useful definitions. Since the scheduling scheme in frame $k$ is a priority scheduling scheme with preemptive-resume queuing discipline, then given the priority list $\bfpi$ we can write the expected waiting time of all SUs in terms of the average residual time \cite[pp. 206]{Bertsekas_Data_Networks} defined as $T_{\pi_j}^{\rm R}\triangleq \sum_{l=1}^j\lambda_{\pi_l}\EE{s_{\pi_l}^2}/2$, where the expectation is taken over $P_{\pi_l}^{(t)}\gamma_{\pi_l}^{(t)}$. The waiting time of SU $\pi_j$ that is given the $j$th priority is \cite[pp. 206]{Bertsekas_Data_Networks}
\begin{multline}
W_{\pi_j}\lb P,\mu_{\pi_j}(P),\rho_{\pi_j}(P),\brho_{\pi_{j-1}},T_{\pi_j}^{\rm R}\rb \triangleq\\
\frac{1}{\lb 1-\brho_{\pi_{j-1}}\rb}\left[\frac{1}{\mu_{\pi_j}(P)} + \frac{T_{\pi_j}^{\rm R}}{\lb 1-\brho_{\pi_{j-1}} - \rho_{\pi_j}(P)\rb}\right]
\label{Priority_Delay}
\end{multline}
where $\rho_i(P)\triangleq \lambda_i/\mu_i(P)$ and $\brho_{\pi_{j-1}} \triangleq \sum_{l=1}^{j-1} \rho_{\pi_l}(P_{\pi_l})$. Moreover, we define
\begin{multline}
\Wup\lb P,\rho_{\pi_j}(P),\brhomax_{\pi_{j-1}},T_{\pi_j}^{\rm R}\rb \triangleq\\
\frac{1}{\lb 1-\brhomax_{\pi_{j-1}}\rb}\left[\frac{1}{\mu_{\pi_j}(P)} + \frac{T_{\pi_j}^{\rm R}}{\lb 1-\brhomax_{\pi_{j-1}} - \rho_{\pi_j}(P)\rb}\right]
\label{Priority_Delay_UB}
\end{multline}
where 
$\brhomax_i$ is some upper bound on $\brho_i$ that will be defined later. We henceforth drop all the arguments of $\Wup(P,\brhomax_{\pi_{j-1}})$ except $P$ and $\brhomax_{\pi_{j-1}}$ and all those of $W_{\pi_j}(P)$ except $P$.

To track the average interference at the PU up to the end of frame $k$ we set up the following virtual queue that is associated with the average interference constraint in problem {\ProbB} and is calculated at the BS at the end of frame $k$.
\begin{equation}
X(k+1)\triangleq \lb X(k) + \sum_{i=1}^N{\sum_{t\in\script{F}(k)}{P_i^{(t)} g_i^{(t)}}} -\Iavg T_k\rb^+,
\label{Avg_Interf_Q}
\end{equation}
where the term $\sum_{i=1}^N{\sum_{t\in\script{F}(k)}{P_i^{(t)}g_i^{(t)}}}$ represents the aggregate amount of interference energy received by the PU due to the transmission of the SUs during frame $k$. Hence, this virtual queue is a measure of how much the SUs have exceeded the interference constraint above the level $\Iavg$ that the PU can tolerate. Lemma \ref{Mean_Rate_Lemma_Avg_Interf} provides a sufficient condition for the interference constraint of problem {\ProbB} to be satisfied.

\begin{lma}
\label{Mean_Rate_Lemma_Avg_Interf}
If $\{X(k)\}_{k=0}^\infty$ is mean rate stable, then the time-average interference received by the PU satisfies $I \leq \Iavg$.
\end{lma}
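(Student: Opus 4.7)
The plan is to mimic the standard Lyapunov virtual-queue drift argument, adapted from the frame-index $k$ to the slot-index $t$ so that the conclusion speaks about the time-average interference defined in \eqref{Avg_Interf_C}. Dropping the $(\cdot)^+$ in \eqref{Avg_Interf_Q} immediately gives the one-step bound
\begin{equation}
X(k+1) \geq X(k) + \sum_{i=1}^N \sum_{t\in\script{F}(k)} P_i^{(t)} g_i^{(t)} - \Iavg T_k.
\end{equation}
Telescoping this inequality from $k=0$ to $k=K-1$ and using $X(0)=0$ yields
\begin{equation}
X(K) \geq \sum_{k=0}^{K-1} \sum_{i=1}^N \sum_{t\in\script{F}(k)} P_i^{(t)} g_i^{(t)} - \Iavg \sum_{k=0}^{K-1} T_k.
\end{equation}

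Next, I would take expectations on both sides and divide by the expected total number of slots $\E\bigl[\sum_{k=0}^{K-1}T_k\bigr]$. Recognizing that the frames $\script{F}(0),\ldots,\script{F}(K-1)$ partition the slots $t=1,\ldots,\sum_{k=0}^{K-1}T_k$ lets me rewrite the double sum as a single sum over the first $\sum_k T_k$ slots, producing
\begin{equation}
\frac{\EE{X(K)}}{\EE{\sum_{k=0}^{K-1}T_k}} \geq \frac{\EE{\sum_{i=1}^N\sum_{t=1}^{\sum_k T_k} P_i^{(t)}g_i^{(t)}}}{\EE{\sum_{k=0}^{K-1}T_k}} - \Iavg .
\end{equation}

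The final step is to let $K\rightarrow\infty$ and argue the left-hand side vanishes. Because each frame contains at least one slot, $\sum_{k=0}^{K-1}T_k \geq K$, and therefore
\begin{equation}
0 \leq \frac{\EE{X(K)}}{\EE{\sum_{k=0}^{K-1}T_k}} \leq \frac{\EE{X(K)}}{K} \xrightarrow[K\rightarrow\infty]{} 0,
\end{equation}
where the limit uses the assumed mean rate stability of $\{X(k)\}_{k=0}^\infty$. Passing to the limit in the displayed inequality, and identifying the right-hand side with $I - \Iavg$ via the definition in \eqref{Avg_Interf_C}, gives $I \leq \Iavg$, completing the proof.

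The only delicate point is the conversion between the frame-indexed accounting used by the virtual queue and the slot-indexed long-term average that appears in the statement; the elementary bound $\sum_k T_k \geq K$ is what lets mean rate stability on the frame scale imply the desired inequality on the slot scale. Interchanging expectation with the limit is routine provided the $(P_i^{(t)}g_i^{(t)})$ are uniformly bounded (which holds by \eqref{Max_Pow_C} and the bounded $\gmax$), so no further integrability argument is needed.
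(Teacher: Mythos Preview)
Your argument is correct and is exactly the telescoping virtual-queue computation the paper has in mind: the paper merely states that the proof ``is similar to that of Lemma~\ref{Mean_Rate_Lemma} and is omitted for brevity,'' and what you wrote is the natural specialization of that argument to the interference queue in \eqref{Avg_Interf_Q}. The only point worth flagging is that $I$ in \eqref{Avg_Interf_C} is written as a pathwise slot-indexed limit whereas your bound controls the expected ratio along frame boundaries; but since the per-slot interference is uniformly bounded by $\Pmax\gmax$, the discrepancy between an arbitrary horizon $T$ and the nearest frame boundary is $O(1)$ and vanishes after dividing by $T$, so this causes no difficulty and matches the level of rigor used in the paper's own proof of Lemma~\ref{Mean_Rate_Lemma}.
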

\begin{proof}
The proof is similar to that of Lemma \ref{Mean_Rate_Lemma} and is omitted for brevity.
\end{proof}
Lemma \ref{Mean_Rate_Lemma_Avg_Interf} says that if the power allocation and scheduling algorithm results in mean rate stable $\Xvq$, then the interference constraint of problem {\ProbB} is satisfied.

Before presenting the {\DOAC} policy, we first discuss the idea behind it. Intuitively, a policy that solves problem {\ProbB} should allocate $\SU{i}$'s power and assign its priority such that $\SU{i}$'s expected delay and the expected interference to the PU is minimized. The {\DOAC} policy is defined as the policy that selects the power parameter vector $\bfP(k)\parFdef{P}$ jointly with the priority list $\bfpi(k)$ that minimizes $\Psi\triangleq\sum_{j=1}^N\psi_{\pi_j}(P_{\pi_j}(k),\brhomax_{\pi_{j-1}})$ where
\begin{equation}
\label{Optimization_Obj}\psi_{\pi_j}(P,\brhomax_{\pi_{j-1}})\triangleq \psiDel_{\pi_j}(P,\brhomax_{\pi_{j-1}}) + \psiInt_{\pi_j}(P), 
\end{equation}
with $\psiDel_{\pi_j}(P,\brhomax_{\pi_{j-1}})\triangleq Y_{\pi_j}(k) \lambda_{\pi_j} \Wup(P,\brhomax_{\pi_{j-1}})$ while $\psiInt_{\pi_j}(P)\triangleq X(k)\rho_{\pi_j}(P)P\bar{g}_{\pi_j}$. The function $\psiDel_{\pi_j}(P,\brhomax_{\pi_{j-1}})$ (and $\psiInt_{\pi_j}(P)$) represents the amount of delay (interference) that SU $\pi_j$ is expected to experience (to cause to the PU) during frame $k$.

The brute search of $\bfP{}(k)$ and $\bfpi(k)$ that minimizes $\Psi$ is exponentially high. To minimize $\Psi$ in a computationally efficient way, we need the functions $\psi_{\pi_j}(P_{\pi_j}(k),\brhomax_{\pi_{j-1}})$ to become decoupled for all $j\in\script{N}$. That is, we want $\psi_{\pi_j}(P_{\pi_j}(k),\brhomax_{\pi_{j-1}})$ not to depend on $P_{\pi_l}(k)$ as long as $l\neq j$. Hence, we set the function $\brhomax_{\pi_{j-1}}$ to some function that does not depend on the optimization power variables $P_{\pi_l}(k)$ for all $l\leq j-1$ but otherwise on some other fixed parameters. We need to choose these parameters such that the bound
\begin{equation}
\brhomax_{\pi_{j-1}}\geq\brho_{\pi_{j-1}} \triangleq \sum_{l=1}^{j-1} \rho_{\pi_l}(P_{\pi_l})
\label{brhomax_Bound}
\end{equation} is satisfied. Thus, these functions, are given by
\begin{equation}
\brhomax_{\pi_{j-1}}\triangleq \sum_{l=1}^{j-1} \rho_{\pi_l}\lb P_{\pi_l}^{\brhomax}\rb,
\label{brhomax}
\end{equation}
where
\begin{equation}
P_{\pi_l}^{\brhomax}\triangleq\arg \min_P \psi_{\pi_l}\lb P,\brhomax_{\pi_{l-1}}\rb.
\label{Local_psi_Inequality}
\end{equation}
With $\brhomax_{\pi_{j-1}}$ given by \eqref{brhomax}, $\psi_{\pi_j}(P_{\pi_j}(k),\brhomax_{\pi_{j-1}})$ is a function in $P_{\pi_j}(k)$ only. Before we show that the choice of \eqref{brhomax} and \eqref{Local_psi_Inequality} guarantees that \eqref{brhomax_Bound} is satisfied, we note that \eqref{brhomax} dictates that in order to find $\brhomax_{\pi_{j-1}}$ we need to find $P_{\pi_l}^{\brhomax}$ for all $l<j-1$. Hence, we find $P_{\pi_j}^{\brhomax}$ recursively starting from $j=1$ at which $\brhomax_{\pi_0}=0$ by definition. It is shown in \cite[Lemma 5, pp. 55]{Ewaisha_Comprehensive} that $\brhomax_{\pi_j}$ is an upper bound on $\brho_{\pi_j}$. $\brhomax_{\pi_j}$ has an advantage over $\brho_{\pi_j}$ (and hence $\psi_{\pi_j}\lb P_{\pi_j},\brhomax_{\pi_{j-1}}\rb$ over $\psi_{\pi_j}\lb P_{\pi_j},\brho_{\pi_{j-1}}\rb$) which is that it is not a function in $P_{\pi_l}$ for $l\neq j$. This decouples the power search optimization problem to $N$ one-dimensional searches.

\begin{algorithm}
\caption{{\DOACopt}: Optimization-problem-solution algorithm called by the {\DOAC} policy at the beginning of frame $k$ to solve for $\bfPst(k)$ as well as $\bfpist(k)$.}
\begin{algorithmic}[1]
\label{DOACopt}
\STATE Define $\script{S}$ as the set of all sets formed of all subsets of $\script{N}$ and define the auxiliary functions
\begin{align*}
&\tPsi(\cdot,\cdot):\script{N}\times\sS\rightarrow \mathbb{R}^+\\
& \trho(\cdot):\sS\rightarrow [0,1],\\
&\tS(\script{X}):\sS\rightarrow\script{N}^{\vert\script{X}\vert},\\
&\tP(\script{X}):\sS\rightarrow[0,\Pmax]^{\vert\script{X}\vert},\\
&\bP(\cdot,\cdot):\sS\times\script{N}\rightarrow[0,\Pmax].
\end{align*}
\STATE Initialize $\tPsi(0,\cdot)=0$, $\trho(\phi)=0$, $\tS(\phi)=[\hspace{0.05in}]$ and $\tP(\phi)=[\hspace{0.05in}]$, where $\phi$ is the empty set.
\FOR{$i=1,\cdots, N$}
\STATE In stage $i$, the first $i$ priorities have been assigned to $i$ users. The corresponding priority list is denoted $[\pi_1,\cdots,\pi_i]$. In stage $i$ we have $\binom{N}{i}$ states each corresponds to a set $j$ formed from all possible combinations of $i$ elements chosen from the set $\script{N}$. We calculate $\tPsi(i,j)$ associated with each state $j$ in terms of $\tPsi(i-1,\cdot)$ obtained in stage $i-1$ as follows.
\FOR{$j\in$ all possible $i$-element sets}
\STATE At state $j\triangleq \{\pi_1,\cdots,\pi_i\}$, we have $i$ transitions, each connects it to state $j'$ in stage $i-1$, where $j'\triangleq j\backslash l$ with $l\in j$. Find the power associated with each transition $l\in j$ denoted $\bP(j,l)\triangleq\arg\min_P \psi_l(P,\trho(j\backslash l))$.
\STATE Set
\begin{align*}
&l^*=\arg\min_{l\in j}\tPsi\lb i-1,j\backslash l\rb+ \psi_l\lb\bP(j,l),\trho(j\backslash l)\rb,\\
&\tPsi(i,j)=\tPsi(i-1,j\backslash l^*)+ \psi_{l^*}\lb\bP(j,l^*),\trho(j\backslash l^*)\rb,\\
&\trho(j)=\trho\lb j\backslash l^*\rb+\rho\lb\bP(j,l^*)\rb,\\
&\tS(j)=\left [\tS\lb j\backslash l^*\rb,l^*\right]^T,\\
&\tP(j)=\left [\tP\lb j\backslash l^*\rb,\bP(j,l^*)\right]^T.
\end{align*}
\ENDFOR
\ENDFOR
\STATE Set $\bfpist(k)= \tS\lb\script{N}\rb$ and $\bfPst(k)=\tP\lb\script{N}\rb$.
\end{algorithmic}
\end{algorithm}
After reducing the search complexity of the power vector, we reduce the search complexity of the priority list from $N!$ to $2^N$. To do this, we use the dynamic programming illustrated in Algorithm \ref{DOACopt} that solves $\min_{\bfpi(k),\bfP{}(k)} \Psi$. Its search complexity is of $O(MN2^N)$ where $M$ is the number of iterations in a one-dimensional search, while $O(1)$ is the complexity of calculating $\Psi$ for a given priority list $\bfpi(k)$ and a given power vector $\bfP{}(k)$. Compared to the complexity of $O(M^N \cdot N!)$ which is that of the $N$-dimensional power search along with the brute-force of all $N!$ permutations of priority list $\bfpi(k)$, this is a large complexity reduction. However, the $O(MN2^N)$ is still high if $N$ was large. Finding an optimal algorithm with a lower complexity is extremely difficult since the scheduling and power control problem are coupled. In other words, in order to find the optimum scheduler we need to know the optimum power vector and vice versa. In Section \ref{Suboptimal} we propose a sub-optimal policy with a very low complexity and little degradation in the delay performance. We now present the {\DOAC} policy that the BS executes at the beginning of frame $k$.

{\bf {\DOAC} Policy} (executed at the beginning of frame $k$):
\begin{enumerate}
	\item The BS executes {\DOACopt} in Algorithm \ref{DOACopt} to find the optimum power parameter vector $\bfPst(k) \parFdef{P^*}$ as well as the optimum priority list $\bfpist(k) \parFdef{\pi^*}$ that will be used during frame $k$.
	\item The BS broadcasts the vector $\bfPst(k)$ to the SUs.
	\item At the beginning of each slot $t\in\script{F}(k)$, the BS schedules $\SU{i^{*(t)}}$ that has the highest priority in the list $\bfpist(k)$ among those having non-empty buffers.
	\item $\SU{i^{*(t)}}$, in turn, transmits $M_{i^{*(t)}}^{(t)}$ bits as dictated by \eqref{Num_Bits} where $P_i^{(t)}=0$ for all $i\neq i^{*(t)}$ while $P_{i^{*(t)}}^{(t)}$ is given by \eqref{Pow_Allocation}.
	\item At the end of frame $k$, for all $i\in \script{N}$ the BS updates:
	\begin{enumerate}
		\item $r_i(k)= d_i$ if $V<Y_i(k)\lambda_i$, and $r_i(k)=0$ otherwise.
		\item $X(k+1)$ via \eqref{Avg_Interf_Q}.
		\item $Y_i(k+1)$ via \eqref{Delay_Q}, $\forall i\in \script{N}$.
	\end{enumerate}
\end{enumerate}
Define $C_X\triangleq\lb(1-a)(2+a)+\EE{B^2}+2\EE{B}(a-a^2)\rb\times\lb\Pmax^2\gmax^2+\Iavg^2\rb/a^2$ and $C\triangleq C_Y+C_X$ where $\EE{B}$ is a bound on the mean of $B(k)$. It can be shown that $\EE{B}$ and $\EE{B^2}$ are finite since the first two moments of the service time are finite (see Appendix \ref{No_Deep_Fade}). Thus, $C_X$ is finite. Next, we state Theorem \ref{Optimality_Avg} that discusses the optimality of the {\DOAC} policy.

\begin{thm}
\label{Optimality_Avg}
If \eqref{Prob} is strictly feasible and the BS executes the {\DOAC} policy, the time average of the SUs' delays satisfy the following inequality in the light traffic regime
\begin{equation}
\sum_{i=1}^N{\bW_i} \leq \frac{aC}{V} + \sum_{i=1}^N{\bW_i^*},
\label{Optimality_Equation_Avg}
\end{equation}
where $\bW_i^*$ is the optimum value of the delay when solving problem {\ProbB}. Moreover, the virtual queues $\Xvq$ and $\Yivq$ are mean rate stable $\forall i \in \script{N}$.
\end{thm}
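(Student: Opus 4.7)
The plan is to follow the same Lyapunov drift-plus-penalty framework used for Theorem \ref{Optimality}, but with a composite Lyapunov function that tracks both the delay virtual queues $\{Y_i(k)\}$ and the average interference virtual queue $\{X(k)\}$. Define
\[
L(k) \triangleq \tfrac{1}{2}\sum_{i=1}^N Y_i^2(k) + \tfrac{1}{2}X^2(k),
\]
and form the frame-based conditional drift $\Delta(k) \triangleq \E[L(k+1)-L(k)\,|\,\bfY(k),X(k)]$. The penalty term is $V\E[\sum_{t\in\script{F}(k)}\sum_i W_i^{(j)}]$ per frame, as in the proof of Theorem~\ref{Optimality}.

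First I would square both sides of \eqref{Delay_Q} and \eqref{Avg_Interf_Q}, use $(\max(0,x))^2 \leq x^2$, and take conditional expectations to produce an upper bound of the form
\[
\Delta(k) + V\,\text{penalty} \leq a C + \sum_{i=1}^N(V - Y_i(k)\lambda_i)\,r_i(k) + \Psi\cdot\E[T_k \,|\, \bfY(k),X(k)],
\]
where $C = C_Y + C_X$ absorbs the moment bounds on arrivals, busy periods, and per-slot interference $P g \leq \Pmax\gmax$ (with $\E[B]$ and $\E[B^2]$ finite by Appendix~\ref{No_Deep_Fade}). The bound on the $X$-queue drift is where $C_X$ enters; the constants exactly match those defined before the theorem. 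The derivation of the $\Psi$ term on the right uses the priority waiting-time formula \eqref{Priority_Delay} together with the fact that under a frame-based priority policy the expected number of bits/interference contributed by each $\SU{\pi_j}$ during the frame decomposes into per-user terms multiplied by $\lambda_{\pi_j}$ and $P\bar g_{\pi_j}$ respectively.

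Next I would observe that Step~4.a of the {\DOAC} policy minimizes the auxiliary-variable term $\sum_i(V-Y_i(k)\lambda_i)r_i(k)$ exactly, while Steps~1--2 (through Algorithm~\ref{DOACopt}) minimize $\Psi = \sum_j \psi_{\pi_j}(P_{\pi_j},\brhomax_{\pi_{j-1}})$ over all priority lists and power vectors. Because $\brhomax_{\pi_{j-1}} \geq \brho_{\pi_{j-1}}$ (by the construction in \eqref{brhomax}--\eqref{Local_psi_Inequality}, already shown in \cite[Lemma 5]{Ewaisha_Comprehensive}), $\Wup \geq W_{\pi_j}$ and hence $\Psi$ is a valid upper bound on the true delay-plus-interference cost. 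In the light traffic regime $\lambda_i \to 0$, we have $\brho_{\pi_{j-1}},\brhomax_{\pi_{j-1}} \to 0$ uniformly, so $\Wup(P,\brhomax_{\pi_{j-1}}) - W_{\pi_j}(P,\brho_{\pi_{j-1}}) = o(1)$, and minimizing $\Psi$ is asymptotically equivalent to minimizing the true drift-plus-penalty.

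The final step is the usual comparison argument. Strict feasibility of \eqref{Prob} guarantees the existence of a stationary randomized $(\bfpi^\omega,\bfP^\omega)$ policy achieving $\sum_i \bW_i^*$ while keeping both $\E[Y_i(k+1)-Y_i(k)]\leq 0$ and $\E[X(k+1)-X(k)]\leq 0$ on average. Plugging this policy into the right-hand side of our drift-plus-penalty bound yields $\Delta(k) + V\cdot\text{penalty}(k) \leq aC + V\sum_i \bW_i^*$ (in the light-traffic limit). Telescoping over $k=0,\ldots,K-1$, dividing by $K$, using $L(0)=0$ and non-negativity of $L(K)$, and letting $K\to\infty$ gives \eqref{Optimality_Equation_Avg}; the same telescoping with the penalty dropped produces $\E[Y_i(K)]/K\to 0$ and $\E[X(K)]/K\to 0$, i.e.\ mean rate stability, and then Lemmas~\ref{Mean_Rate_Lemma} and \ref{Mean_Rate_Lemma_Avg_Interf} complete the argument.

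The main obstacle I anticipate is controlling the gap introduced by replacing $\brho_{\pi_{j-1}}$ by its upper bound $\brhomax_{\pi_{j-1}}$. Outside the light-traffic regime, $\Wup$ can be substantially larger than $W_{\pi_j}$, so the upper bound minimized by {\DOACopt} is no longer a tight surrogate for the true Lyapunov drift-plus-penalty. Quantifying that $\Wup - W_{\pi_j}$ is $O(\sum_i \lambda_i)$ uniformly over admissible powers, and folding this into an additive $o(1)$ term inside $C$, is the delicate part; everything else is a near-verbatim extension of the DOIC proof in Appendix~\ref{Optimality_Proof_Inst} with the extra $X(k)$ drift term carried along.
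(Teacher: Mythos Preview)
Your proposal is essentially correct and follows the same Lyapunov drift-plus-penalty route as the paper's proof in Appendix~\ref{Optimality_Proof}: composite Lyapunov function in $(X,\bfY)$, squaring \eqref{Delay_Q} and \eqref{Avg_Interf_Q} to bound the drift, minimizing the resulting $r_i(k)$-term and $\Psi$ separately, then comparing against the optimal feasible policy and telescoping. A few minor slips to correct when you write it up: the penalty added to the drift is $V\sum_i \EEU{r_i(k)\FDurK}$ (the auxiliary-variable surrogate), not $V\E[\sum_i W_i^{(j)}]$; the auxiliary-variable step in {\DOAC} is Step~5.a, not 4.a; and the telescoped sum is normalized by $V\sum_{k}\EE{\FDurK}$ (whence the factor $a$ via $\EE{\FDurK}\geq 1/a$), not by $K$.
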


\begin{proof}
See Appendix \ref{Optimality_Proof}.
\end{proof}

Similar to Theorem \ref{Optimality}, Theorem \ref{Optimality_Avg} says that the interference and delay constraints of problem {\ProbB} are satisfied since the virtual queues $\Xvq$ and $\Yivq$ are mean rate stable. Hence, the performance of the {\DOAC} policy is asymptotically optimal.

The intuition behind the {\DOAC} policy is similar to that behind the {\DOIC} policy with some differences stated here. When upper bounding the drift-plus-penalty term, we obtain the expression $\sum_{i=1}^N \lb V -Y_i(k) \lambda_i\rb r_i(k)+\Psi$ where $\Psi$ is defined before \eqref{Optimization_Obj}. Minimizing the first term in this bound is carried out in Step 5.a of the {\DOAC} policy. On the other hand, minimizing $\Psi$ is carried out using the dynamic programming in Algorithm \ref{DOACopt}. The dynamic programing finds the optimum values of the two vectors $\bfpi(k)$ and $\bfP(k)$ in an efficient way of complexity $O(NM2^N)$ without having to calculate the objective function $\Psi$ for the whole sample space of size $N!\times M^N$. The reason we were able to use this algorithm is because we were able to find an upper bound $\Wup$ that does not depend on the vector $\bfpi(k)$, a property that is necessary for the dynamic programming and that is absent in $\bW_{\pi_j}$.


\subsection{Near-Optimal Low Complexity Algorithm for Average Interference Constraint Problem}
\label{Suboptimal}
As seen in the {\DOAC} policy, the complexity of finding the optimal power vector and priority list can be high when the number of SUs $N$ is large. This is mainly due to the large complexity of Algorithm \ref{DOACopt}. In this subsection we propose a suboptimal solution with an extreme reduction in complexity and with little degradation in the performance. This solution solves for the power allocation and scheduling algorithm, thus it replaces the Algorithm \ref{DOACopt}.

The challenges in Algorithm \ref{DOACopt} are three-fold. First finding the priority list (scheduling problem) requires the search over $N!$ possibilities. Second, even with a genie-aided knowledge of the optimum list, we still have to carry-out $N$ one-dimensional searches to find $\bfPst(k)$ (power control problem). Third, the scheduling and power control problems are coupled. We tackle the latter two challenges first, by finding a low-complexity power allocation policy that is independent of the scheduling algorithm. Then we use the $c\mu$ rule \cite{c_mu_Rule} to find the priority list. The $c\mu$ rule is a policy that gives the priority list that minimizes the quantity $\sum_{i=1}^N Y_i(k)\lambda_i W_i(P_i(k))$, given some power allocation vector $\bfP(k)$.

For each priority list $\bfpi$ Algorithm \ref{DOACopt} minimizes $\psi_{\pi_j}(P)\triangleq \psiD(P)+\psiI(P)$ for each $\SU{i}$. 
Define $\Pmin$ to be the minimum power that satisfies $\sum_{j=1}^N\rho_{\pi_j}(\Pmin)<1$. Intuitively, if, for some $\pi_j\in\script{N}$, $X(k)\gg Y_{\pi_j}(k)$ then $P_{\pi_j}^*(k)$ is expected to be close to $\Pmin$ since the interference term $\psiI(P)$ dominates over $\psiD(P)$ in the $\pi_j$th term of the summation in \eqref{Optimization_Obj}. On the other hand, if $X(k)\ll Y_{\pi_j}(k)$ then $P_{\pi_j}^*(k)\approx \Pmax$. We propose the following power allocation policy for $\SU{\pi_j}$  $\forall\pi_j\in\script{N}$
\begin{equation}
\hat{P}_{\pi_j}(k)=
\left\{
\begin{array}{lll}
	\Pmin \mbox{ if } X(k)>Y_{\pi_j}(k)\\
	\Pmax \mbox{ otherwise.}
\end{array}
\right.
\label{Subopt_Power}
\end{equation}
We can see that the power allocation policy in \eqref{Subopt_Power} does not depend on the position of $\SU{i}$ in the priority list as opposed to Algorithm \ref{DOACopt} which requires the knowledge of $\SU{\pi_j}$'s priority position. In other words, $\hat{P}_{\pi_j}(k)$ is a function of $\pi_j$ but it is not a function of $j$. 
Before proposing the scheduling policy, we note the following two properties based on the knowledge of the power $\bfPst(k)$. First, when $X(k)=0$, the solution to the minimization problem $\min_\bfpi \Obj$ is given by the $c\mu$ rule \cite{c_mu_Rule} that sorts the SUs according to the descending order of $Y_{\pi_j}(k)\mu_{\pi_j}(\hat{P}_{\pi_j}(k))$. Second, when $Y_{\pi_j}(k)=0$ $\forall \pi_j \in \script{N}$, any sorting order would not affect the objective function $\Obj$.


The two-step scheduling and power allocation algorithm that we propose is 1) allocate the power vector $\bfP(k)$ according to \eqref{Subopt_Power}, then 2) assign priorities to the SUs in a descending order of $Y_{\pi_j}(k)\mu_{\pi_j}(\hat{P}_{\pi_j}(k))$ (the $c\mu$ rule). The complexity of this algorithm is that of sorting $N$ numbers, namely $O(N\log(N))$. This is a very low complexity if compared to that of the {\DOAC} policy of $O(MN \cdot N!)$. In Section \ref{Results} we will demonstrate that this huge reduction of complexity causes little degradation to the delay performance.

\section{Simulation Results}
\label{Results}
We simulated a system of $N=5$ SUs. Unless otherwise specified, Table \ref{Parameters} lists all parameter values for both scenarios; the instantaneous as well as the average interference constraint. $\SU{i}$'s arrival rate is set to $\lambda_i=i\lambda$ for some fixed parameter $\lambda$. All SUs are having homogeneous channel conditions except $\SU{5}$ who has the highest average interference channel gain. Thus $\SU{5}$ is statistically the worst case user. We assume that the SUs' delay constraints are $d_i=\dHigh$ $\forall i\leq 4$, and $d_5=\dLow$. In practice, $\Ts$ is around 1ms. We have chosen the values of $d_i$ to provide stringent QoS guarantees based on the $150ms$ average delay value for video packets recommended by CISCO (see \cite{Cisco_QoS}).
\begin{table}
	\centering
		\caption{Simulation Parameter Values}
		\label{Parameters}
		\begin{tabular}{|c|c||c|c|}
			\cline{1-4}
			Parameter & Value & Parameter & Value\\
			\cline{1-4}
			$(d_1,\cdots d_4,d_5)$ & $(\dHigh,\cdots,\dHigh,\dLow)\Ts$ & $\bgamma_i$ & $1$\\
			$\gammamax$ & $10\bgamma_i$ & $\Iinst$ & 20 \\
			$\gmax$ & $10\bgi$ & $\Pmax$ & 100 \\
			$\fgammai(\gamma)$ & $\exp{\lb-\gamma/\bgamma_i\rb}/\bgamma_i$ & $\alpha$ & 0.1 \\
			$\fgi(g)$ & $\exp{\lb-g/\overline{g}_i\rb}/\overline{g}_i$ & $\epsilon$ & $0.1$ \\
			 $L$ & $1000$ bits/packet & $V$ & $100$ \\
			$(\overline{g}_1,\cdots\overline{g}_4,\overline{g}_5)$ & $(0.1,\cdots,0.1,0.4)$ & $\Iavg$& 5
			\\ \cline{1-4}
			\end{tabular}
\end{table}

\subsection{Per-user Performance}
We first consider problem {\ProbB} since it is more general. Fig. \ref{PerUser_Delay_Avg_N5} plots average per-SU delay $\bW_i$, from \eqref{Delay_Frame}, versus $\lambda$ assuming perfect knowledge of the direct and interference channel state information (CSI), namely $\gamma_i^{(t)}$ and $g_i^{(t)}$. The plot is for the {\DOAC} policy for two cases; the first being the constrained case where $d_5=\dLow\Ts$, while the second is the unconstrained case where $d_5=\dHigh\Ts$. We call it the unconstrained problem because the average delay of all SUs is strictly below $\dLow\Ts$, thus all delay constraints are inactive. We choose to compare these two cases to show the effect of an active versus an inactive delay constraint. From Fig. \ref{PerUser_Delay_Avg_N5} we can see that $\SU{5}$ has the worst average delay. However, for the constrained case, the {\DOAC} policy has forced $\bW_5$ to be smaller than $\dLow\Ts$ for all $\lambda$ values. This comes at the cost of another user's delay. We conclude that the delay constraints in problem {\ProbA} can force the delay vector of the SUs to take any value as long as it is strictly feasible.

\subsection{Total System's Delay Performance}
In Fig. \ref{Sum_Delay_Avg_DOAC_CSMA_Subopt_CSI_Unconst}, we compare the aggregate delay performance of seven different schemes following the parameters in Table \ref{Parameters} unless otherwise specified; 1) Cognitive Network Control policy proposed in \cite{Neely_CNC_2009} which is a version of the MaxWeight scheduling; 2) Carrier-Sense-Multiple-Access (CSMA) that assigns the channel equally likely to all users while allocating the same power as the {\DOAC} policy (genie-aided power allocation), 3) {\DOAC} in the presence of channel state information (CSI) errors; 4) Suboptimal policy proposed in Section \ref{Suboptimal}, 5) The constrained {\DOAC} case (or simply the {\DOAC}), 6) The {\DOIC} policy that neglects the average interference constraint; and 7) The Unconstrained {\DOAC} case having $d_5=\dHigh\Ts$. In the presence of CSI errors, we assumed that each SU has an error of $\alpha=10\%$ in estimating each of $\gamma_i^{(t)}$ and $g_i^{(t)}$. The actual and observed values of $\gamma_i^{(t)}$ and $g_i^{(t)}$ are related by $\gamma_i^{(t)}=\frac{\gammaerr}{1+\alpha/2}$ and $g_i^{(t)}=\frac{\gerr}{1-\alpha/2}$, respectively. In order to avoid outage we substitute by $\gamma_i^{(t)}$ in \eqref{Tx_Rate} while to guarantee protection to the PU from interference, we substitute $g_i^{(t)}$ in \eqref{Pow_Allocation} for the {\DOAC} policy.

In Fig. \ref{Sum_Delay_Avg_DOAC_CSMA_Subopt_CSI_Unconst} the relative delay gap between the perfect and imperfect CSI is around $5\%$ and $9\%$ at light and high traffic, respectively. The performance of this error model represents an upper bound on the actual difference since $\alpha=10\%$ is usually an upper bound on the actual estimation error. When implementing the suboptimal algorithm we find that the sum delay across SUs is very close to its optimal value found via Algorithm \ref{DOACopt}. This holds for both light and heavy traffics with delay performance gaps $0.06\%$ and $0.3\%$, respectively and they both outperform the CSMA and the CNC. This is because the proposed policies prioritize the users based on their delay and interference realizations. On the other hand, the CSMA allocates the channel to guarantee fairness of allocation across time and the CNC's goal is to maximize the achievable rate region \cite{li2011delay}.
\begin{figure}%
\centering
\includegraphics[width=0.95\columnwidth]{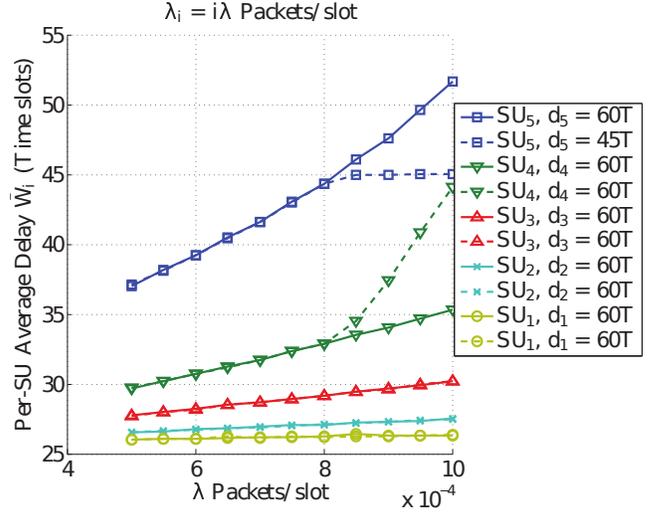}%
\caption{Average per-SU delay for both the constrained and unconstrained cases. Both cases are simulated using the {\DOAC} policy. $\SU{5}$ is the user with the worst channel statistics and the largest arrival rate. The {\DOAC} can guarantee a bound on $\bW_5$.}%
\label{PerUser_Delay_Avg_N5}%
\end{figure}

\begin{figure}%
\centering
\includegraphics[scale=0.5]{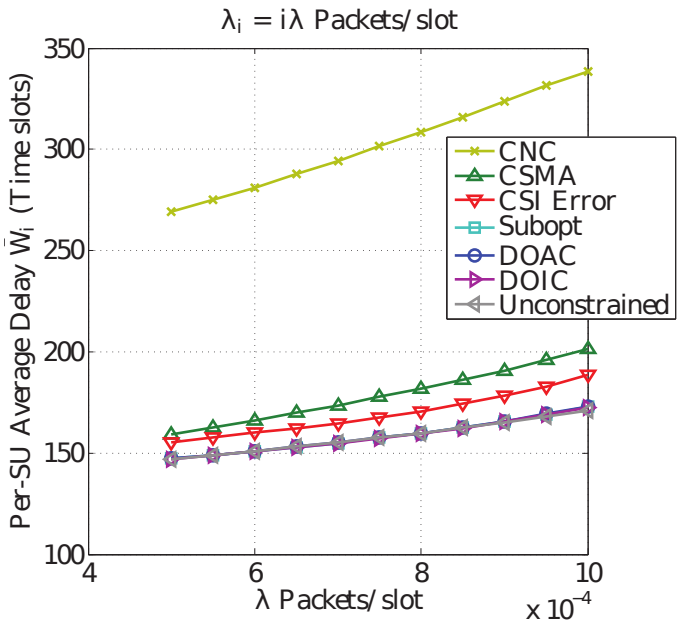}%
\caption{The average delay performance of seven schemes. The {\DOAC} and the suboptimal policies are within $0.3\%$, while both outperform the CSMA and the CNC by more than $8.2\%$ and $83\%$, respectively. The {\DOAC} under CSI errors experiences less than $9\%$ increase in the delay.
}
\label{Sum_Delay_Avg_DOAC_CSMA_Subopt_CSI_Unconst}%
\end{figure}

Problem {\ProbA} differs than problem {\ProbB} in the average interference constraint. Thus the {\DOIC} is a lower bound on both the constrained and the unconstrained {\DOAC} as shown in Fig. \ref{Sum_Delay_Avg_DOAC_CSMA_Subopt_CSI_Unconst}. However, since the problem is delay limited and not interference limited, this delay increase is minor.




\section{Conclusion}
\label{Conclusion}
We have studied the joint scheduling and power allocation problem of an uplink multi SU CR system. We formulated the problem as a delay minimization problem in the presence of average and instantaneous interference constraints to the PU, as well as an average delay constraint for each SU. Most of the existing literature that studies this problem either assume on-off fading channels or do not provide a delay-optimal algorithm which is essential for real-time applications.

We proposed a dynamic algorithm that schedules the SUs by dynamically updating a priority list based on the channel statistics, history of arrivals, departures and channel realizations. The proposed algorithm updates the priority list on a per-frame basis while controlling the power on a per-slot basis. 
We showed, through the Lyapunov optimization, that the proposed {\DOAC} policy is asymptotically delay optimal.

When the number of SUs $N$ in the system is large, the complexity of the {\DOAC} policy scales as $O(MN\cdot 2^N)$, where $M$ is the number of iterations required to solve a one-dimensional search. Hence, we proposed a suboptimal algorithm with a complexity of $O(N\log(N))$ that does not sacrifice the performance significantly. Simulation results showed the robustness of the {\DOAC} policy against CSI estimation errors.


\appendices
\section{Proof of Theorem \ref{Optimality}}
\label{Optimality_Proof_Inst}
\begin{proof}
In this proof, we show that the drift-plus-penalty under this algorithm is upper bounded by some constant, which indicates that the virtual queues are mean rate stable \cite{georgiadis2006resource,urgaonkar2011optimal}.

We define $\bfU(k)=\bfY(k)$ and the Lyapunov function as $L(k) \triangleq\frac{1}{2}\sum_{i=1}^N Y_i^2(k)$ and Lyapunov drift to be
\begin{equation}
\Delta (k) \triangleq \EEY{L(k+1) - L(k)},
\label{Drift_Def}
\end{equation}
Squaring \eqref{Delay_Q} then taking the conditional expectation we can write the following bound
\begin{multline}
\frac{1}{2}\E_{\bfU(k)} \left[ Y_i^2(k+1)-Y_i^2(k)\right] \leq \\
Y_i(k) \EEY{\FDurK}\lambda_i \lb \EEY{W_i^{(j)}}-r_i(k)\rb + C_{Y_i},
\label{Delay_Q_Sq2}
\end{multline}
where we use the bound $\EEY{\lb \sum_{j\in \script{A}_i(k)} W_i^{(j)}\rb^2}+\EEY{\lb\sum_{j\in \script{A}_(k)}r_i(k)\rb^2}<C_{Y_i}$. We omit the derivation of this bound due to lack of space. Given some fixed control parameter $V>0$, we add the penalty term $V\sum_i \EEY{r_i(k)\FDurK}$ to both sides of \eqref{Drift_Def}. Using the bound in \eqref{Delay_Q_Sq2} the drift-plus-penalty term becomes bounded by
\begin{equation}
\Delta \lb k\rb + V\sum_{i=1}^N \EEY{r_i(k)\FDurK}\leq C_Y+\EEY{\FDurK} \Phi_{\rm I}
\label{Drift_Plus_Penalty1}
\end{equation}
where $\Phi_{\rm I}$ is given by equation \eqref{Upper_Bound_Inst}. We define the {\DOIC} policy to be the policy that finds the values of $\bfpi(k)$, $\{\bfP^{(t)}\}$ and $\bfr(k)$ vector that minimize $\Phi_{\rm I}$ subject to the instantaneous interference, the maximum power and the single-SU-per-time-slot constraints in problem {\ProbA}. We can observe that the variables $\bfr(k)$, $\{\bfP^{(t)}\}$ and $\bfpi(k)$ can be chosen independently from each other. Step 4.a in the {\DOIC} policy finds the optimum value of $r_i(k)$, $\forall i\in\script{N}$. Moreover, since $\EEY{W_i^{(j)}}$ is decreasing in $\Pit$ $\forall t\in\script{F}(k)$, the optimum value for $\Pit$ is \eqref{Power_Allocation}. Finally, from \cite{c_mu_Rule} the $c\mu$-rule can be applied to find the optimum priority list $\bfpi(k)$ which is given by Step 1 in the {\DOIC} policy.

Now, since the proposed {\DOIC} policy minimizes $\Phi_{\rm I}$, this gives a lower bound on $\Phi_{\rm I}$ compared to any other policy including the optimal policy that solves {\ProbA}. Hence, we now evaluate $\Phi_{\rm I}$ at the optimal policy that solves {\ProbA} with the help of a genie-aided knowledge of $r_i(k)=\bW_i^*$ yielding $\Phi_{\rm I}^{\rm opt}=V\sum_{i=1}^N \bW_i^*$, where we use $\EEY{W_i^{(j)}}=\bW_i^*$. Substituting by $\Phi_{\rm I}^{\rm opt}$ in the right-hand-side (r.h.s.) of \eqref{Drift_Plus_Penalty1} gives an upper bound on the drift-plus-penalty when evaluated at the {\DOIC} policy. Namely
\begin{equation}
\Delta \lb k\rb + V\sum_{i=1}^N \EEY{r_i(k)\FDurK}\leq C_Y + V\sum_{i=1}^N \bW_i^*\EEY{\FDurK}.
\label{DOIC_Genie}
\end{equation}
Taking $\EE{\cdot}$, summing over $k=0,\cdots,K-1$, denoting $\bfY_i(0)\triangleq 0$ for all $i\in\script{N}$, and dividing by $V\sum_{k=0}^{K-1} \EE{\FDurK}$ we get
\begin{multline}
\sum_{i=1}^N \frac{\EE{Y_i^2(K)}}{\sum_{k=0}^{K-1} \EE{\FDurK}}+ \sum_{i=1}^N \frac{\sum_{k=0}^{K-1}\EE{r_i(k)\FDurK}}{\sum_{k=0}^{K-1}\EE{\FDurK}} \overset{(a)}{\leq}\\
\frac{aC_Y}{V} + \sum_{i=1}^N \bW_i^*\triangleq C_1.
\label{Optimal_Eq}
\end{multline}
where in the r.h.s. of inequality (a) we used $\EE{\FDurK}\geq \EE{I(k)}=1/a$, and $C_1$ is some constant that is not a function in $K$. To prove the mean rate stability of the sequence $\{Y_i(k)\}_{k=0}^\infty$ for any $i\in\script{N}$, we remove the first and third terms in the left-side of \eqref{Optimal_Eq} as well as the summation operator from the second term to obtain $\EE{Y_i^2(K)}/K \leq C_1$ $\forall i\in\script{N}$. Using Jensen's inequality we note that $\EE{Y_i(K)}/K \leq \sqrt{\EE{Y_i^2(K)}/K^2} \leq \sqrt{C_1/K}$.
Finally, taking the limit when $K\rightarrow \infty$ completes the mean rate stability proof. On the other hand, to prove the upper bound in Theorem \ref{Optimality}, we use the fact that $r_i(k)$ and $\vert \script{A}_i(k) \vert$ are independent random variables (see step 4-a in {\DOIC}) to replace $\EE{\vert \script{A}_i(k) \vert {r_i(k)}}$ by $\lambda_i\EE{\FDurK r_i(k)}$ in \eqref{Wait_r_i}, then we take the limit of \eqref{Wait_r_i} as $K\rightarrow \infty$, use the mean rate stability theorem and sum over $i\in\script{N}$ to get
\begin{multline}
\sum_{i=1}^N \frac{\EE{\sum_{k=0}^{K-1} \lb\sum_{j\in \script{A}_i(k)}W_i^{(j)}\rb}}{\EE{\sum_{k=0}^{K-1}{\vert\script{A}_i(k)\vert}}} \leq \sum_{i=1}^N \frac{\sum_{k=0}^{K-1}\EE{r_i(k)\FDurK}}{\sum_{k=0}^{K-1}\EE{\FDurK}}\\
\overset{(b)}{\leq} \frac{aC_Y}{V} + \sum_{i=1}^N \bW_i^*,
\label{Optimality_Eq2}
\end{multline}
where inequality (b) comes from removing the first summation in the left-side of \eqref{Optimal_Eq}. Taking the limit when $K\rightarrow \infty$ and using \eqref{Delay_Frame} completes the proof.
\end{proof}

\section{Existence of The Service Time Moments}
\label{No_Deep_Fade}
\begin{lma}
\label{No_Deep_Fade_Lemma}
Given any distribution for $\Pit\gamma_i^{(t)}$ the inequality $\EE{s_i^n}<\infty$ holds $\forall n\geq 1$.
\end{lma}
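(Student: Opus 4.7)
\textit{Proof plan.} My plan is to stochastically dominate $s_i$ by a negative binomial random variable, and then use the fact that such variables have finite moments of every order. First, observe that within any frame $k$, the power parameter satisfies $P_i(k)\geq \Pmin>0$ and the instantaneous transmit power obeys $\Pit=\min(\Iinst/\git,P_i(k))\geq \min(\Iinst/\gmax,\Pmin)\triangleq P^\circ>0$, where I have used $\git\leq \gmax$. Since $\gamma_i^{(t)}$ is independent of $\git$ and is not identically zero, there exist constants $\gamma^\circ>0$ and $q>0$ with $\Prob{\gamma_i^{(t)}\geq \gamma^\circ}\geq q$. Setting $r_0\triangleq \Ts\log(1+P^\circ\gamma^\circ)>0$ and $p_0\triangleq q$, I obtain the uniform lower bound $\Prob{\Ri\geq r_0}\geq p_0$ that holds in every slot and every frame in which SU $i$ is served.

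Next, define $K\triangleq \lceil L/r_0\rceil$ and note that if, among the slots in which SU $i$ is actually transmitting, $K$ of them satisfy $\Ri\geq r_0$, then the cumulative number of bits served exceeds $L$ and the HOL packet has been fully delivered. Consequently $s_i$ is dominated by $N_K$, the number of i.i.d. Bernoulli($p_0$) trials required to accumulate $K$ successes. The random variable $N_K$ is negative binomial, i.e., a sum of $K$ independent geometric random variables with parameter $p_0$. A direct moment generating function computation (or a standard reference) shows that each geometric has finite moments of all orders, and hence so does the finite sum $N_K$; therefore $\EE{N_K^n}<\infty$ for every $n\geq 1$.

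Combining these observations yields $\EE{s_i^n}\leq \EE{N_K^n}<\infty$ for all $n\geq 1$, which is the claim. The only delicate step is the construction of the uniform positive pair $(r_0,p_0)$: it is crucial that the power stays bounded away from zero (which uses the upper bound $\gmax$ on the interference channel and the strict positivity of $\Pmin$) and that the direct channel distribution is not degenerate at zero. Once this non-degeneracy is in hand, the stochastic domination by a negative binomial makes finiteness of all moments immediate.
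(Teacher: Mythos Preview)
Your proposal is correct and follows essentially the same route as the paper: both arguments stochastically dominate $s_i$ by a negative binomial random variable and then invoke the finiteness of all negative binomial moments. The only cosmetic difference is that the paper takes the ``success'' event to be $\{R_i^{(t)}>0\}$ and sets the required number of successes equal to $L$, whereas you introduce an explicit positive rate threshold $r_0$ (via the lower bound $P^\circ$ on the transmit power and a non-degeneracy point $\gamma^\circ$ for the direct channel) and set the required number of successes to $K=\lceil L/r_0\rceil$; your version is therefore slightly more explicit about why each success contributes a strictly positive number of bits.
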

\begin{proof}
Given some, possibly random, power allocation policy $\Pit$ let the random variable $\sB_i\triangleq \sNB_i+L$ where $\sNB_i$ is a random variable following the negative binomial distribution \cite[pp. 297]{degroot2011probability} with success probability $1-\Prob{\Ri=0}$ while number of successes equals $L$. We can show that $\Prob{s_i\leq x}\geq\Prob{\sB_i\leq x}$. Hence, according to the theory of stochastic ordering, the moments of $s_i$ are upper bounded by their respective moments of $\sB_i$ \cite[equation (2.14) pp. 16]{rajan2014ordering}. The lemma holds since all the moments of $\sB_i$ exist, a fact that is based on the fact that the moments of the negative binomial distribution exist \cite[pp. 297]{degroot2011probability}.

\end{proof}

\section{Proof of Theorem \ref{Optimality_Avg}}
\label{Optimality_Proof}
\begin{proof}
This proof is similar to that in Appendix \ref{Optimality_Proof_Inst}. We define $\bfU(k)\triangleq [X(k) , \bfY(k)]^T$, the Lyapunov function as $L(k) \triangleq \frac{1}{2}X^2(k)+\frac{1}{2}\sum_{i=1}^N Y_i^2(k)$ and Lyapunov drift as in \eqref{Drift_Def}. Following similar steps as in Appendix \ref{Optimality_Proof_Inst} and using the bound $\EEU{\lb\sum_{i=1}^N\sum_{t\in\script{F}(k)}\Pit \git\rb^2+\lb\Iavg \FDurK\rb^2}<C_X$, where $C_X$ is defined before Theorem \ref{Optimality_Avg}, we get the following bound on the drift-plus-penalty term
\begin{equation}
\Delta \lb \bfU(k)\rb + V\sum_{i=1}^N \EEU{r_i(k)\FDurK}\leq C+\EEU{\FDurK}\chi(k),
\label{Drift_Plus_Penalty_Avg}
\end{equation}
where
\begin{equation}
\chi(k)\triangleq \sum_{i=1}^N \lb V -Y_i(k) \lambda_i\rb r_i(k)+\Phi_{\rm A},
\label{chi}
\end{equation}
with
\begin{multline}
\Phi_{\rm A}\triangleq\sum_{l=1}^N \lb Y_{\pi_l}(k) \lambda_{\pi_l} \EEU{W_{\pi_l}^{(j)}} + \right.\\
\left. X(k)\lb\frac{\EEU{\sum_{t\in\script{F}(k)}P_{\pi_l}^{(t)} g_{\pi_l}^{(t)}}}{\EEU{\FDurK}} -\Iavg\rb\rb
\end{multline}
We define the {\DOAC} policy to be the policy that jointly finds $\bfr(k)$, $\{\bfP^{(t)}\}$ and $\bfpi(k)$ that minimize $\chi(k)$ subject to the instantaneous interference, the maximum power and the single-SU-per-time-slot constraints in problem {\ProbB}. Step 5-a in the {\DOAC} policy minimizes the first summation of $\chi(k)$. For $\{\bfP^{(t)}\}$ and $\bfpi(k)$, we can see that $\Phi_{\rm A}$ is the only term in the right side of \eqref{chi} that is a function of the power allocation policy $\{\bfP^{(t)}\}$, $\forall t\in\script{F}(k)$. For a fixed priority list $\bfpi(k)$, using the Lagrange optimization to find the optimum power allocation policy that minimizes $\Phi_{\rm A}$ subject to the aforementioned constraints yields \eqref{Pow_Allocation}, where $P_{\pi_j}(k)$, $\forall i\in\script{N}$, is some fixed power parameter that minimizes $\Phi_{\rm A}$ subject to the maximum power constraint only. Substituting by \eqref{Pow_Allocation} in $\Phi_{\rm A}$ and using the bound $\EEU{W_{\pi_l}^{(j)}}=W_{\pi_l}(P_{\pi_l}(k))\leq W_{\pi_l}^{\rm up}(P_{\pi_l}(k))$ we get $\Psi$ that is defined before \eqref{Optimization_Obj}. Consequently, $\bfPst(k)$ and $\bfpist(k)$, the optimum values for $\bfP(k)$ and $\bfpi(k)$ respectively, are the ones that minimize $\Psi$ as given by Algorithm \ref{DOACopt}.

Since the optimum policy that solves {\ProbB} satisfies the interference constraint, i.e. satisfies $\EEU{\sum_{t\in\script{F}(k)}P_{\pi_l}^{(t)} g_{\pi_l}^{(t)}} \leq\EEU{\FDurK}\Iavg$, we can evaluate $\chi(k)$ at this optimum policy with a genie-aided knowledge of $r_i(k)=\bW_i^*$ to get $\chi^{\rm opt}\triangleq V\sum_{i=1}^N\bW_i^*$. Replacing $\chi(k)$ with $\chi^{\rm opt}$ in the r.h.s. of \eqref{Drift_Plus_Penalty_Avg} we get the bound $\Delta \lb \bfU(k)\rb + V\sum_{i=1}^N \EEU{r_i(k)\FDurK}\leq C+\EEU{\FDurK}V\sum_{i=1}^N\bW_i^*$. Taking $\EE{\cdot}$ over this inequality, summing over $k=0,\cdots,K-1$, denoting $X(0)\triangleq \bfY_i(0)\triangleq 0$ for all $i\in\script{N}$, and dividing by $V\sum_{k=0}^{K-1} \EE{\FDurK}$ we get
\begin{multline}
\frac{\EE{X^2(K)}}{\sum_{k=0}^{K-1}\EE{\FDurK}}+\sum_{i=1}^N \frac{\EE{Y_i^2(K)}}{\sum_{k=0}^{K-1} \EE{\FDurK}}+ \sum_{i=1}^N \frac{\sum_{k=0}^{K-1}\EE{r_i(k)\FDurK}}{\sum_{k=0}^{K-1}\EE{\FDurK}} \\
\leq \frac{CK}{V\sum_{k=0}^{K-1}\EE{\FDurK}} + \sum_{i=1}^N \bW_i^*.
\label{Optimal_Eq_Avg}
\end{multline}
Similar steps to those in Appendix \ref{Optimality_Proof_Inst} can be followed to prove the mean rate stability of $\{X(k)\}_{k=0}^\infty$ and $\{Y_i(k)\}_{k=0}^\infty$ as well as the bound in Theorem \ref{Optimality_Avg}, and thus are omitted here.
\end{proof}

\bibliographystyle{IEEEbib}
\bibliography{MyLib}

%

\begin{IEEEbiography}[{\includegraphics[width=1in,height=1.25in,clip,keepaspectratio]{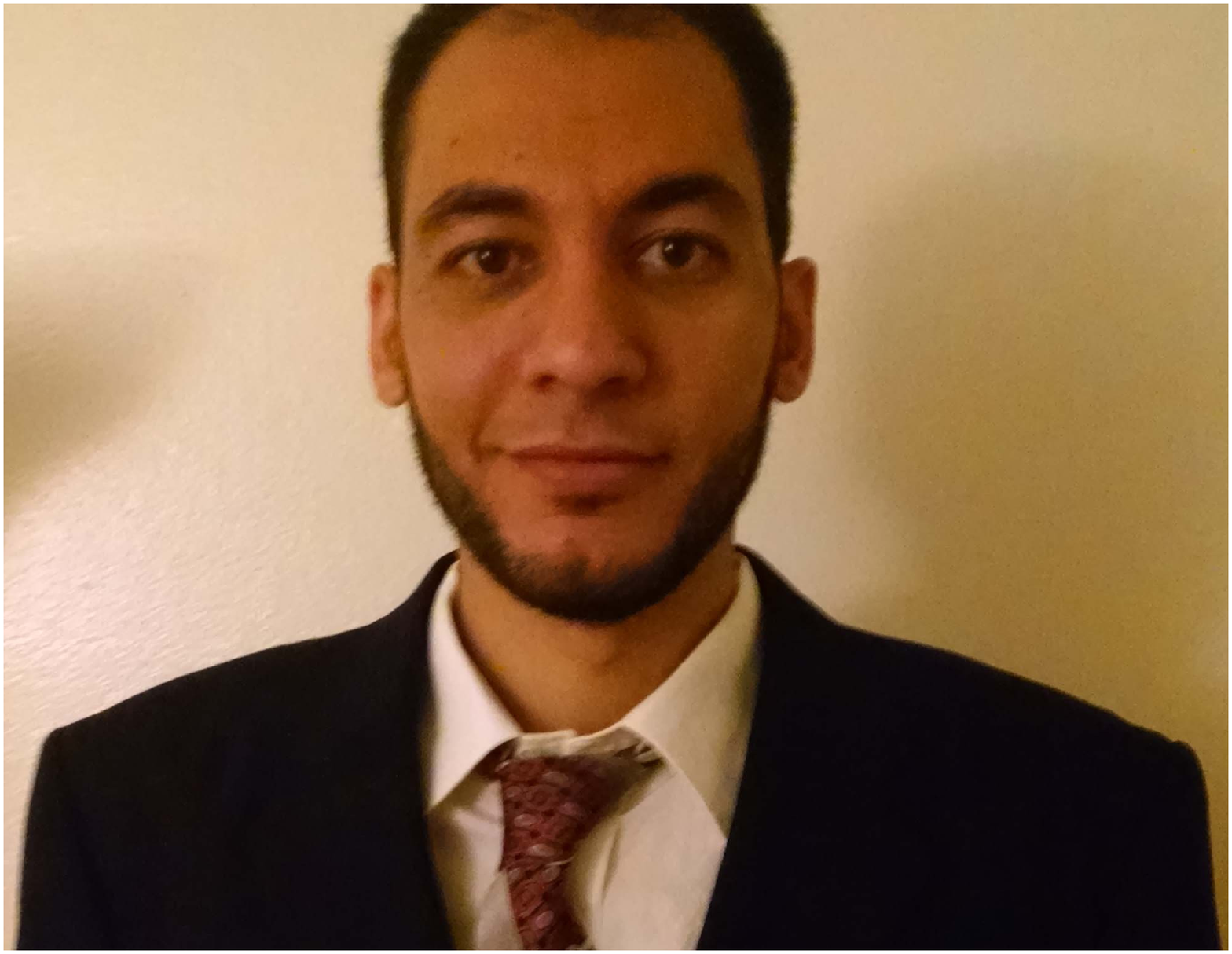}}]{Ahmed E. Ewaisha} was born in Cairo, Egypt in 1987. He received his B.S. degree with honors in electrical engineering ranking top 5\% on his class at Alexandria University in 2009. Consequently, he was admitted to Nile University that is considered the first research-based university in Egypt where he received his M.S. degree in 2011 in wireless communications.

In fall 2011, he joined the Ira A. Fulton School of engineering at Arizona State University, Tempe, where he is now working towards his PhD degree studying the delay analysis in cognitive radio networks. His research interests span a wide area of wireless as well as wired communication networks including stochastic optimization, power allocation, cognitive radio networks, resource allocation and quality-of-service guarantees in data networks.
\end{IEEEbiography}

\begin{IEEEbiography}[{\includegraphics[width=1in,height=1.25in,clip,keepaspectratio]{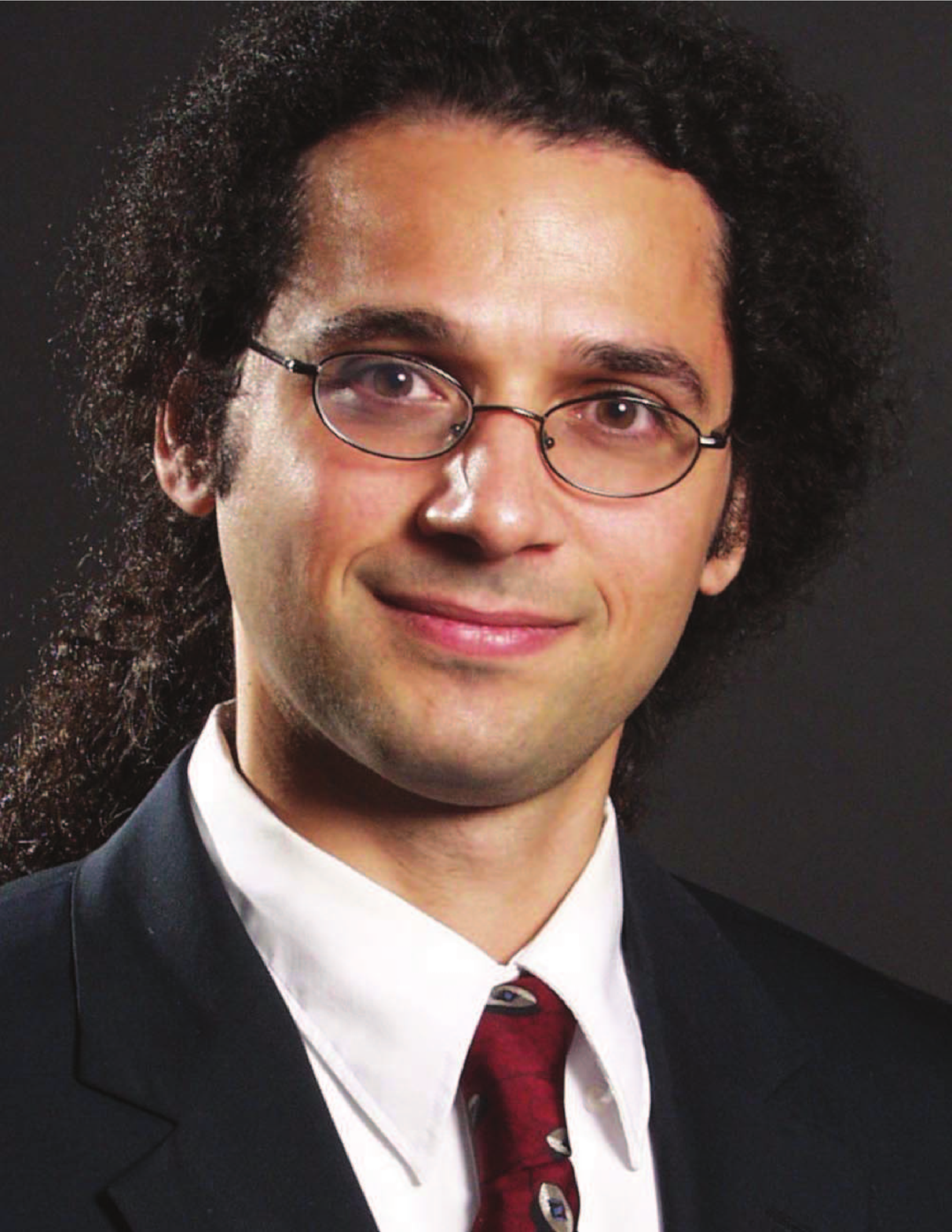}}]{Cihan Tepedelenlio\u{g}lu} (S’97-M’01) was born in Ankara, Turkey in 1973. He received his B.S. degree with highest honors from Florida Institute of Technology in 1995, and his M.S. degree from the University of Virginia in 1998, both in electrical engineering. From January 1999 to May 2001 he was a research assistant at the University of Minnesota, where he completed his Ph.D.  degree in electrical and computer engineering. He is currently an Associate Professor of Electrical Engineering at Arizona State University.

Prof. Tepedelenlio\u{g}lu was awarded the NSF (early) Career grant in 2001, and has served as an Associate Editor for several IEEE Transactions including IEEE Transactions on Communications, IEEE Signal Processing Letters, and IEEE Transactions on Vehicular Technology. His research interests include statistical signal processing, system identification, wireless communications, estimation and equalization algorithms for wireless systems, multi-antenna communications, OFDM, ultra-wideband systems, distributed detection and estimation, and data mining for PV systems.
\end{IEEEbiography}






\end{document}